\DeclareMathOperator*{\argmin}{arg\,min}
\newtheorem{definition}{Definition}
\newtheorem{theorem}{Theorem}
\newtheorem{lemma}{Lemma}
\newtheorem{proposition}{Proposition}
\newtheorem{observation}{Observation}
\newtheorem{assumption}{Assumption}
\newtheorem*{remark}{Remark}
\DeclareMathOperator*{\E}{\mathbf{E}}
\newcommand{\dd}{\mathrm{d}}
\begin{document}
\title{Selling an Item through Persuasion}

\author{
Zhikang Fan \\ Renmin University of China\\ \texttt{fanzhikang@ruc.edu.cn}
\and
Weiran Shen\\ Renmin University of China \\ \texttt{shenweiran@ruc.edu.cn}
}

\maketitle

\begin{abstract}
A monopolistic seller aims to sell an indivisible item to multiple potential buyers. Each buyer's valuation depends on their private type and the item's quality. The seller can observe the quality but it is unknown to buyers. This quality information is valuable to buyers, so it is beneficial for the seller to strategically design experiments that reveal information about the quality before deciding to sell the item to whom and at what price.

We study the problem of designing a revenue-maximizing mechanism that allows the seller to disclose information and sell the item. First, we recast the revelation principle to our setting, showing that the seller can focus on one-round mechanisms without loss of generality. We then formulate the mechanism design problem as an optimization problem and derive the optimal solution in closed form. The optimal mechanism includes a set of experiments and payment functions. After eliciting buyers' types, the optimal mechanism asks a buyer to buy and sets a price accordingly. The optimal information structure involves partitioning the quality space. Additionally, we show that our results can be extended to a broader class of distributions and valuation functions.
\end{abstract}
\setcounter{tocdepth}{1} 

\onehalfspacing

\newpage 

\section{Introduction}
\label{sec:introduction}

Optimal auction design stands as a critical issue in the field of economics and game theory, garnering sustained attention over the years~\citep{myerson1981optimal,armstrong2000optimal,shi2012optimal,papadimitriou2011optimal,deng2014revenue}. The most renowned contribution in this area is the seminal work by~\cite{myerson1981optimal}. Most subsequent work has adhered to the standard assumption that buyers hold private information, and the seller aims to design a mechanism with the property of incentive compatibility to elicit buyers' true information. 

However, in reality, the seller often possesses private information about the item for sale. For instance, the seller knows the true quality of the item, while buyers can only ascertain that after purchasing and using the item for some time. This scenario motivates us to study the mechanism design problem when both parties have private information. In such a context, if the seller aims to maximize revenue, the mechanism's outcome (e.g., the winner and the payment) may need to be based on the seller's private information. Consequently, the outcome itself might partially disclose the seller's private information. As rational players, buyers will update their valuation based on this information if their valuation depends on the seller's private information (such as the item's quality). This leads to a potential situation where buyers might have an incentive to disobey the allocation. To prevent such adverse effects, the seller must carefully handle how the private information is disclosed when deciding the mechanism outcome. 

Such information asymmetry is ubiquitous in the literature and has attracted significant research attention~\citep{akerlof1978market,bergemann2015limits,schottmuller2023optimal}. There is a line of works investigating how an information owner can strategically disclose information to benefit themselves, also known as ``information design'' or ``Bayesian persuasion''~\citep{rayo2010optimal,kamenica2011bayesian,alonso2016persuading,bergemann2016information}. Additionally, some researchers explore how a party can leverage their information advantage to make profits, referred to as ``selling information''~\citep{babaioff2012optimal,horner2016selling,bergemann2022selling}. In most existing works, the information designer or seller is typically a third party rather than the item seller. To the best of our knowledge, there is limited understanding of how to design optimal mechanisms for a seller who possesses both private information and the item for sale.

To fill this gap, we study the optimal mechanism design problem for an item seller who possesses an information advantage over buyers regarding the item's quality. Inspired by the information design literature, we adopt the ``Bayesian persuasion''  framework~\citep{kamenica2011bayesian} to model the way of revealing information. 

\subsection{Our Results}
In this paper, we begin by defining a general mechanism space that contains all possible mechanisms the seller can use. We then demonstrate that it is without loss of generality to focus on a much smaller subspace, where each mechanism involves only a single round of communication between the seller and the buyers.

Based on this result, we formulate the optimal mechanism design problem as an optimization problem. Furthermore, we fully characterize the optimal mechanism and derive a closed-form solution to this problem. In the optimal mechanism, the seller asks the buyer with the highest ``virtual value''~\citep{myerson1981optimal} to buy the item, or refrain from selling if the highest virtual value does not exceed the seller's threshold. While we allow for randomized information disclosure, the seller consistently sends deterministic messages to the buyers. The information structure features a partition of the quality space, i.e., the seller informs the buyer whether the true quality falls within a specific subset. 

Although our mechanism shares similarities with the Myerson auction, it is derived by considering how information disclosure impacts buyers' valuations. We show that the Myerson auction can be seen as a special case of ours, where the quality remains constant. Finally, we show that our results can be easily extended to a more general setting. 

\subsection{Related work}
Our research is grounded in the literature on information design. We adopt the ``Bayesian persuasion'' framework, proposed by the seminal work~\citep{kamenica2011bayesian}, to model how the seller designs information. Most subsequent works assume that only one party has private information. The most relevant work to ours in this line of work is the persuasion model with a privately informed receiver~\citep{kolotilin2017persuasion}. However, there is a substantial distinction between persuasion and selling an item through persuasion. In our context, the seller has only one item for sale, and not every buyer can buy it if they want to. Thus the results are not comparable. \cite{guo2019interval} also studied a similar setting, showing that the optimal information structure has a nested interval structure. In contrast, we show that the information interval in our optimal mechanism can also exhibit similar structures.

Our work aligns with the research on mechanism design with information revelation. \cite{esHo2007optimal} study a setting very similar to ours, except that they assume that the seller cannot observe the quality and has no reserve price. Despite these similarities, their results are quite different from ours. They show that revealing full information is optimal, whereas our information structure features a partition of the quality space. \cite{wei2022price} study a single buyer setting and derive a closed form solution under the MHR condition. In contrast, we consider a multi-buyer setting and show that our results can be extended to a more general class of distributions. Additionally, there is a series of works focusing on information disclosure under specific auction formats, such as second-price auction~\citep{bergemann2022optimal} and posted price auction~\citep{castiglioni2022signaling}. Unlike these studies, we need to jointly design both the information structure and the selling mechanism.

Another related yet significantly different problem is the selling information~\citep{liu2021optimal,bergemann2018design,chen2020selling}. In their setting, the seller sells information before the buyer takes action and can therefore charge an upfront fee. In contrast, we consider the setting where the seller reveals information and charges the payment for the item itself, resulting in significantly different analyses. Additionally, unlike the selling of information, an indivisible item can only be sold once.

\section{Preliminaries}
We consider a setting where a monopolistic seller aims to sell an indivisible item to multiple potential buyers. The seller can privately observe the item's quality, denoted by $q\in Q$. There are $n$ potential buyers, represented by the set $N = \{1, 2,\dots, n \}$. Each buyer $i$ has a private type, denoted by $t_i \in T_i$, which captures buyer $i$'s personal preference. 

The item's quality $q$ is a random variable drawn from a publicly known distribution $G(q)$. The support of $q$ is defined as $[\underline{q}, \bar{q}]$. We assume that $G(q)$ is differentiable in $[\underline{q}, \bar{q}]$ and that $g(q)$ is the corresponding probability density function(PDF). Similarly, each buyer $i$'s type $t_i$ can be described by a publicly known distribution $F_i(t_i)$ with support $T_i=[\underline{t_i}, \bar{t_i}]$. Suppose that $F_i(t_i)$ has a strictly positive and continuous density function $f_i(t_i)$. Let $T$ and $T_{-i}$ denote the product space of all buyers' supports and the product space of all buyers' supports except $i$:
\begin{align*}
    T=\prod_{i\in N} T_i\quad\text{and}\quad T_{-i} = \prod_{j\in N, j\neq i } T_j.
\end{align*}

We assume that all the random variables defined above are independent. Thus, the joint density function on $T$ is:
\begin{align*}
	f(t) = \prod_{j\in N} f_j(t_j),
\end{align*}
where $t = (t_1, \dots, t_n)\in T$ represents the buyers' type profile. For any buyer $i$, their prior belief about the types of the other buyers is given by:
\begin{align*}
	f_{-i} (t_{-i}) = \prod_{j\in N, j \neq i} f_j(t_j),
\end{align*}
where $t_{-i} = (t_1, \dots, t_{i-1}, t_{i+1}, \dots, t_n)\in T_{-i}$.

The valuation of each buyer $i$ depends on both the quality of the item $q$ and their own type $t_i$, thus there are no externalities among buyers. Formally, let $v_i(t_i, q)$ be the valuation of buyer $i$ with type $t_i$ on an item of quality $q$. Throughout this paper, we assume that $v_i(t_i, q)$ is monotone non-decreasing in $t_i$ for any $q\in Q$. For simplicity, we assume that $v_i(t_i, q)$ is linear in $t_i$ and takes the form $v_i(t_i, q) = \alpha(q) t_i$, where $\alpha(q) > 0$ for all $q$. In Section \ref{sec:generalization}, we show that our results can also be extended to a more general class of valuation functions. 
We also assume that the seller has a personal valuation on the item, denoted by $r(q)$, which can be seen as a reserve price. 


Since the seller has an information advantage on the item's quality, it is valuable for them to consider a general mechanism where they can privately communicate with buyers, possibly in multiple rounds, before determining the winner and payments. Consequently, this mechanism design problem inherently involves both information design and mechanism design. Our goal in this paper is to determine the revenue-maximizing mechanism for the seller when they can jointly design information and sell the item. We will address the following questions in the remainder of this paper:
\begin{enumerate}
	\item In what mechanism space should the seller design the optimal mechanism?
	\item What is the revenue-maximizing mechanism within this space?
\end{enumerate}

\section{Mechanism Space}\label{sec:space}
The interactions between the seller and buyers can be described as an extensive-form game. To maximize revenue, the seller can interact with buyers through various means. The set of all possible interactions forms a very large space, making it intractable to search for a revenue-maximizing one without a well-defined mechanism space. 

Previous work~\citep{babaioff2012optimal} defined a general interactive protocol for an information seller who is faced with a single buyer. Next, we extend this protocol to our setting, where an item seller interacts with multiple buyers. Therefore, the optimal mechanism we seek is the one that provides the maximum profit for the seller within the defined space. 

\begin{definition}[General interactive protocol]\label{def:protocol}
A general interactive protocol is a mechanism that induces a finite extensive-form game between the seller and buyers. The game can be fully described by a game tree, where any non-leaf node $h$ in the game tree belongs to one of the following types:
	\begin{itemize}
		\item {\bfseries{buyer node}}, where a certain buyer sends a message to the seller, each leading to a child node;
		\item {\bfseries{seller node}}, where the seller privately reveals information to a buyer. Any seller node $h$ prescribes the seller's behavior by associating each quality $q$ with a distribution over its children nodes $C(h)$. Formally, the prescription on node $h$ is a collection of distributions $\psi_h^q \in \Delta(C(h)) $, one for each $q$;
		\item {\bfseries{transfer node}}, which has only a single child node and is associated with a monetary transfer (possibly negative) $t_h$ from a certain buyer to the seller;
		\item {\bfseries{seller decision node}}, where the seller decides to which buyer he should sell the item. We allow for randomized decisions, thus the description on each seller decision node $h$ also is a collection of distributions $\psi_h^q \in \Delta(C(h))$;
		\item {\bfseries{buyer decision node}}, where a buyer decides whether to buy when the seller asks him. Once the buyer decides to buy, the game ends and goes to a leaf node. 
	\end{itemize}
$C(h)$ denotes the set of all children nodes of any node $h$. 
\end{definition}
Similar definitions also appear in the literature on selling information or services \citep{liu2021optimal,FanS23}. In line with their definitions, we allow the seller to charge and the buyer to exit the mechanism at any time. However, a key distinction in our setting is that we permit the seller to turn to another buyer if a previous sale ask is rejected, thereby incorporating multiple seller decision nodes. This flexibility arises from our multi-buyer setting, whereas most existing works focus on scenarios involving a single buyer. 

The only constraint we impose on the game tree structure is that when a buyer is asked to make a purchase decision at a seller decision node, there must be a corresponding buyer decision node that follows, ensuring the buyer's right to reject the offer. 

\begin{remark}
The difference between a seller node and a seller decision node is that each seller decision node has at most $n+1$ children\footnote{There may be fewer than $n+1$ children nodes if some buyers leave early.}, corresponding to $n$ buyers and a leaf node (indicating the seller chooses not to sell the item).
\end{remark}

Just like in the classic definition of Bayesian games, we assume the presence of another player called \emph{nature}, who selects the type profile $(t, q)$ for the buyers and seller according to $F(t)$ and $G(q)$ before they interact. The timing of the game is as follows:
\begin{enumerate}
    \item The seller commits a general interactive protocol;
    \item Nature chooses the type profile $(t, q)$ for the buyers and seller;
    \item The buyers interact with the seller according to the committed protocol.
\end{enumerate}
The protocol prescribes the seller's behavior for each $q$, and we assume that the seller has commitment power and always adheres to the protocol.


\subsection{One-round mechanism}
To start, we define the concept of direct mechanisms and recast the celebrated revelation principle~\citep{myerson1979incentive,gibbard1973manipulation} in our setting.

\begin{definition}[Direct Mechanism]
A direct mechanism is a protocol represented by a tree where all nodes with depth $n$ or less are buyer nodes, where buyers report types directly to the seller. Moreover, there are no other buyer nodes in the tree, and the path from the root to any node of depth $n$ contains exactly a buyer node for each buyer. 
\end{definition}
In a game tree, any node can be uniquely determined by the path extending from the root to that node. Hence, in a direct mechanism, the seller directly gathers all buyers' types and subsequently decides to whom the item is sold and the corresponding payments. We say a direct mechanism is \emph{truthful} if it is in the best interest of each buyer to reveal their true type, even when the buyer decides not to buy the item.

\begin{lemma}[Revelation Principle~\citep{myerson1981optimal}]
For any general interactive protocol, there exists a truthful direct mechanism that extracts the same expected revenue.
\end{lemma}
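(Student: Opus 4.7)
The plan is to adapt the classical revelation principle argument to our setting, where the novelty is that the seller holds private information $q$ and may disclose it selectively through seller nodes. Fix any general interactive protocol $\Gamma$ together with a Bayes--Nash equilibrium profile $\sigma = (\sigma_1,\dots,\sigma_n,\sigma_{\text{seller}})$ of the induced game, and let $R$ denote its expected revenue. I would construct a direct mechanism $\Gamma'$ that \emph{simulates} $\Gamma$ under $\sigma$: the first $n$ levels are buyer nodes at which each buyer $i$ reports a type $\hat t_i$; after collecting the profile $\hat t$, the seller internally traces a path through $\Gamma$'s game tree, playing $\sigma_{\text{seller}}$ at every seller node (using the true $q$ to sample $\psi_h^q$), playing $\sigma_i(\hat t_i, \cdot)$ on behalf of buyer $i$ at every buyer node, and executing every transfer node and seller decision node exactly as $\sigma$ prescribes. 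Whenever the simulation reaches a buyer decision node for buyer $i$, $\Gamma'$ privately forwards to buyer $i$ the same transcript of seller messages that buyer $i$ would have observed along this branch in $\Gamma$, and then lets the \emph{real} buyer $i$ accept or reject.

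For truthfulness, suppose buyer $i$ reports $\hat t_i \neq t_i$ and then obeys the decision suggested at the buyer decision nodes. This is outcome-equivalent to buyer $i$ deviating inside $\Gamma$ by playing $\sigma_i(\hat t_i,\cdot)$ while actually having type $t_i$; since $\sigma$ is an equilibrium this cannot strictly improve buyer $i$'s interim utility over playing $\sigma_i(t_i,\cdot)$, i.e., reporting truthfully. More generally, any joint deviation at the report stage together with the buyer decision stage in $\Gamma'$ can be ``lifted'' to a deviation in $\Gamma$, using the fact that in $\Gamma$ buyer $i$ could have generated any such response by choosing messages and purchase decisions appropriately at their own buyer nodes and buyer decision nodes. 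Hence equilibrium of $\Gamma$ implies truthful reporting plus obedience is a best response in $\Gamma'$. Because the simulation reproduces, for every realization of $(q,t)$, the same distribution over transfers and allocations as $\Gamma$ under $\sigma$, the expected revenue of $\Gamma'$ equals $R$.

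The main obstacle is bookkeeping of information rather than a single deep step: I must verify that at every buyer decision node the \emph{real} buyer in $\Gamma'$ holds exactly the same posterior over $(q, t_{-i})$ that the \emph{simulated} buyer would hold at the corresponding history in $\Gamma$. This needs the private channel from the seller to buyer $i$ to carry precisely the transcript that $\Gamma$ would have generated on this branch, and nothing more --- so buyer $i$'s belief is pinned down by $G$, $F_{-i}$, the equilibrium strategies $\sigma_{-i}$, and the seller's disclosure rule $\{\psi_h^q\}_h$, all of which are common knowledge. Once this belief-matching is in place, incentive compatibility at buyer decision nodes in $\Gamma'$ follows from sequential rationality at the corresponding nodes in $\Gamma$, and the lemma follows. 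It is also worth noting that a mild measurability/finite-depth caveat from Definition~\ref{def:protocol} (the tree is finite) is what ensures that this simulation is well defined and that all expectations above are finite.
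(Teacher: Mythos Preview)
The paper does not supply its own proof of this lemma; it is simply quoted from \citet{myerson1981optimal} and used as a black box before the paper proves its own Theorem~\ref{the:revelation}. So there is nothing in the paper to compare your argument against.

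Your sketch is the standard simulation proof of the revelation principle and is essentially correct for this setting. One small redundancy: you speak of an equilibrium component $\sigma_{\text{seller}}$, but in this model the seller has commitment power and the protocol already \emph{prescribes} the seller's behavior at every seller node and seller decision node via the distributions $\psi_h^q$ (see Definition~\ref{def:protocol}); there is no separate seller strategy to fix. With that adjustment, your construction yields a mechanism whose first $n$ levels are buyer report nodes and whose remaining nodes are seller/transfer/decision nodes replaying the original tree, which matches the paper's definition of a direct mechanism, and the belief-matching argument you outline is exactly what is needed to transfer incentive compatibility from the original equilibrium.
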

The above result enables us to focus on the set of direct mechanisms. Nevertheless, direct mechanisms can still have arbitrary arrangements of other nodes. Next, we show that our setting admits a stronger version of the revelation principle that permits us to focus on one-round mechanisms. This contrasts with selling information where it is known that when engaged in information revelation, multi-round mechanisms can extract strictly higher revenue than one-round mechanisms~\citep{babaioff2012optimal}.

A one-round mechanism is a mechanism in which the path from the root to any leaf node comprises at most one seller decision node. Within such a mechanism, the seller does not ask a second buyer even if the first one refuses to buy the item. As the seller only asks one buyer, one-round mechanisms can be represented by a payment $p^t$ and an experiment $\pi^t$, both depending on buyers' reported types. We assume that the payment function for buyer $i$ only depends on the type he reports. Hence, the payment function $p:T\mapsto \mathbb{R}^n$ can be decomposed to $n$ individual payment functions: $p_i: T_i \mapsto \mathbb{R}$, for any buyer $i\in N$. We justify this assumption as being reasonable since each buyer's utility and decision only depend on his type. Thus, the information about other buyers' types obtained from the dependence of the payment function on those types is of no use.

To describe how the seller discloses information, we need to introduce the concept of \emph{experiments}. Given a signal set $\Sigma$, an experiment $\pi: Q \mapsto \Delta(\Sigma)$ is a mapping from the quality $q$ to a distribution over the signal set $\Sigma$. With a slight abuse of notation, we use $\pi(t, q)$ to denote the distribution over $\Sigma$ and $\pi(t, q, \sigma)$ to denote the probability of sending signal $\sigma$ when buyers report types $t$ and the quality is $q$. This way of revealing information is also called ``persuasion'' or ``information structure'' in the literature~\citep{kamenica2011bayesian,bergemann2018design}.

\begin{lemma}[\cite{bergemann2018design}]
It is without loss of generality to focus on the responsive experiment where the signal space has at most the cardinality of the outcome space.
\end{lemma}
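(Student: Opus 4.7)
The plan is to adapt the standard ``recommendation signal'' argument of Bayesian persuasion to our one-round setting. Fix any one-round mechanism with payments $p^t$ and experiment $\pi^t$ on an arbitrary signal space $\Sigma$. After types are reported, the seller picks a single buyer $i$ and that buyer's only subsequent action space is $A = \{\text{accept}, \text{reject}\}$ (plus the trivial option where the seller chooses not to sell, which does not involve a signal at all). Hence, conditional on a type profile $t$, the only quantity that enters either party's payoff is the joint law of $(q, a)$ where $a \in A$ is the action induced at the buyer decision node. The goal is to show we can replace $\Sigma$ by $A$ without changing this joint law or any incentive.

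First, I would compute, for each $\sigma \in \Sigma$, the posterior over $q$ induced by $\sigma$ and the buyer's best response $a^{\ast}(\sigma) \in A$, breaking ties in the seller's preferred direction (which is without loss of generality by a standard perturbation or commitment argument). This partitions $\Sigma$ into $\Sigma_a = \{\sigma : a^{\ast}(\sigma) = a\}$ for $a \in A$. Next, I would define the collapsed experiment
\begin{equation*}
\tilde{\pi}^t(q, a) \;=\; \sum_{\sigma \in \Sigma_a} \pi^t(q, \sigma),
\end{equation*}
replacing the sum by the appropriate integral if $\Sigma$ is uncountable. By construction the joint distribution over $(q,a)$ is identical under $\pi^t$ and $\tilde{\pi}^t$.

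The two things left to verify are responsiveness and preservation of the mechanism's incentives. Responsiveness: conditional on signal $a$ under $\tilde{\pi}^t$, the buyer's posterior over $q$ is a convex combination of posteriors induced by signals in $\Sigma_a$, and since expected utility is linear in the posterior while each component posterior already makes $a$ a best response, $a$ remains optimal. Incentive preservation: the seller's revenue and the buyer's interim utility from each type report depend only on the joint law of $(t, q, a)$, which is unchanged, so the type-reporting IC and IR constraints inherited from the truthful direct mechanism remain valid. This gives an equivalent mechanism with signal space of cardinality $|A|$, bounded by the cardinality of the outcome space as claimed.

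The part requiring the most care is the handling of ties in $a^{\ast}$ and the measurability of $\sigma \mapsto a^{\ast}(\sigma)$ when $\Sigma$ is not finite, since the construction of $\tilde{\pi}^t$ as a valid Markov kernel hinges on selecting $a^{\ast}$ measurably. Both are resolved by a standard measurable-selection argument, but they should be invoked explicitly so that the rest of the computation, which is algebraically routine, goes through verbatim.
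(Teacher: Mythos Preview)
The paper does not prove this lemma at all; it is stated with a citation to \cite{bergemann2018design} and used as a black box to justify taking $\Sigma$ to have $n+1$ elements. So there is no in-paper argument to compare against, and your proposal is effectively an attempt to supply the missing proof.

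Your recommendation-collapse idea is the right template, but the way you instantiate it does not match the setting, and the mismatch is substantive. You write that ``the seller picks a single buyer $i$ and that buyer's only subsequent action space is $A=\{\text{accept},\text{reject}\}$,'' and you then reduce $\Sigma$ to $|A|=2$. But in the general one-round mechanism the identity of the buyer who is approached is \emph{itself} determined through the signal (and through $q$), not fixed in advance; that is precisely why the paper's outcome space is the $(n+1)$-element set $\{\text{buyer }1\text{ gets it},\dots,\text{buyer }n\text{ gets it},\text{no sale}\}$, and why the reduced $\Sigma$ in the paper is indexed by $s_0,s_1,\dots,s_n$ rather than by accept/reject. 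By freezing $i$ before you collapse, you strip the signal of its role in selecting the buyer, and your construction $\tilde\pi^t(q,a)$ no longer reproduces the original joint distribution over $(q,\text{allocation})$ across buyers. In particular, two signals that lead to the same accept/reject response but from \emph{different} buyers get merged in your $\Sigma_a$, which changes the allocation and hence the revenue.

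The fix is to run the collapse on the full outcome map, not on a single buyer's action: for each $\sigma\in\Sigma$, record the induced outcome $o^\ast(\sigma)\in\{s_0,s_1,\dots,s_n\}$ (which buyer is asked and whether they accept, with all ``reject'' outcomes identified with $s_0$), and set $\tilde\pi^t(q,s_i)=\sum_{\sigma: o^\ast(\sigma)=s_i}\pi^t(q,\sigma)$. Then the linearity/obedience step you already wrote goes through for each asked buyer, and the joint law of $(q,\text{allocation})$ is preserved, giving the $n+1$ bound the lemma actually asserts. Your tie-breaking and measurable-selection remarks are fine once this correction is made.
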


Based on the above results, we can focus on the set of experiments where there exists a one-to-one correspondence between the set of signals and the set of outcomes. In our setting, there are $n+1$ possible outcomes, with $n$ of them corresponding to each buyer obtaining the item and an additional one corresponding to no buyer getting the item. Thus, we can define $\Sigma$ as follows:
\begin{align*}
    \Sigma =  \left\{ x\in \{0, 1\}^n : \sum_{i=1}^n x_i \le 1   \right\},
\end{align*}
where $x_i=1$ corresponds to the outcome where buyer $i$ obtains the item and $x_i=0$ for all $i$ corresponds to the outcome where the seller retains the item. Hence, we can interpret each signal as an indicator of the buyer that the seller will ask. From the perspective of implementation, the seller can send each buyer a message indicating whether the buyer will be asked or not, that is, the seller sends the $i$-th element to buyer $i$. For simplicity, we denote the signal with $x_i=1$ as $s_i$, and the signal with $x_i=0$ for all $i$ as $s_0$. Therefore, when the seller sends signal $s_i$, they asks buyer $i$ to buy the item. 

Now we are prepared to formally define the one-round mechanism.
\begin{definition}[One-round Mechanism]
A one-round mechanism $\mathcal{M}$ is a set $\{( \pi^t, p^t) \}_{t\in T}$, and the game proceeds as follows:
\begin{enumerate}
    \item The seller announces a mechanism $\mathcal{M}$;
    \item The quality $q$ and buyers' types $t$ are realized;
    \item Buyers privately report their types $t_i'$ to the seller;
    \item The seller sends a signal according to $\pi(t', q)$; 
    \item The buyer who is asked to buy the item decides whether to buy or not;
    \item If the buyer decides to buy, they pay $p_i(t_i')$.
\end{enumerate}
\end{definition}
It is easy to see that a one-round mechanism can be represented as a general interactive protocol. When the seller asks the buyer to consider buying the item, we allow the buyer to reject it if they find it unprofitable. We call a mechanism \emph{obedient} if a truthful buyer always decides to buy the item whenever asked. And if a mechanism is both truthful and obedient, we say that the mechanism is \emph{incentive compatible}. 

Next, we show that it is without loss of generality to focus on the set of one-round mechanisms.

\begin{theorem}
\label{the:revelation}
For any general interactive protocol, there exists an incentive compatible, one-round mechanism that extracts the same expected revenue.
\end{theorem}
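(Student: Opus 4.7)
The plan is to start from the truthful direct mechanism supplied by the Revelation Principle and then collapse its extensive-form structure into a single round by committing, in advance, to the distribution of allocations the original mechanism would produce. Given any general interactive protocol, the Revelation Principle yields an incentive-compatible direct mechanism $\mathcal{D}$ with identical expected revenue, in which buyers first report types $t_i$. Under truthful reporting and obedient acceptance in $\mathcal{D}$, let $x_i(t, q)$ denote the probability that buyer $i$ ends up with the item and $P_i(t, q)$ the expected payment collected from buyer $i$, both as functions of the reported profile $t$ and the realized quality $q$. Since outcomes are mutually exclusive, $\sum_i x_i(t, q) \le 1$ pointwise.

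I would then construct the one-round mechanism $\mathcal{M}' = \{(\pi^t, p^t)\}_{t\in T}$ by setting the experiment to $\pi(t, q, s_i) := x_i(t, q)$ for each $i \in N$ and $\pi(t, q, s_0) := 1 - \sum_i x_i(t, q)$, and the payment to
\[
p_i(t_i) := \frac{\E_{t_{-i}, q}[P_i(t, q)]}{\E_{t_{-i}, q}[x_i(t, q)]}
\]
whenever the denominator is positive (and arbitrarily otherwise, since the seller then never asks buyer $i$). By construction, $\mathcal{M}'$ replicates $\mathcal{D}$'s marginal allocation distribution on every $(t, q)$, so the expected revenue collected from buyer $i$ is $\E_{t_i}[p_i(t_i)\cdot \E_{t_{-i}, q}[x_i(t, q)]] = \E_{t}\E_q[P_i(t, q)]$, matching $\mathcal{D}$ term by term.

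Next I would verify incentive compatibility of $\mathcal{M}'$. For \emph{obedience}, a buyer $i$ who truthfully reports $t_i$ and receives $s_i$ has posterior expected value $t_i\cdot \E_{t_{-i}, q}[\alpha(q) x_i(t, q)]/\E_{t_{-i}, q}[x_i(t, q)]$, so accepting is optimal iff $\E_{t_{-i}, q}[\alpha(q)\, t_i\, x_i(t, q) - P_i(t, q)] \ge 0$. This is precisely the interim IR of $\mathcal{D}$ at type $t_i$, which must hold since rejecting every offer in $\mathcal{D}$ is always a feasible strategy yielding utility $0$. For \emph{truthfulness}, consider a buyer of type $t_i$ contemplating a misreport $t_i'$. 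If the one-round posterior value at $s_i$ exceeds $p_i(t_i')$, the optimal continuation is to accept, and the resulting expected utility equals that of the "report $t_i'$, then always accept" deviation in $\mathcal{D}$, which the IC of $\mathcal{D}$ bounds by the truthful utility. If instead the posterior falls below $p_i(t_i')$, the optimal continuation is to reject for utility $0$, which is bounded by the truthful utility via the IR established in the obedience step.

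The main subtlety, and where the argument does real work, is that in $\mathcal{M}'$ a misreporting buyer retains the option to walk away when the posted price looks unattractive, and this looks superficially like a deviation not used on $\mathcal{D}$'s obedient path. Handling it requires pairing the IC of $\mathcal{D}$ (which dominates the "misreport then accept" branch) with the IR of $\mathcal{D}$ (which dominates the "misreport then reject" branch); neither constraint alone suffices, but the two-case split closes the gap. Intuitively, the one-round signal is a coarsening of whatever information buyer $i$ could have acquired in the multi-round protocol, so the compression does not unlock any genuinely new information-based deviation.
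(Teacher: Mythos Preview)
Your construction is essentially the paper's: set $\pi(t,q,s_i)$ to the on-path probability that buyer $i$ ends up with the item in the original protocol and set $p_i(t_i)$ to the ratio of buyer $i$'s expected transfer to expected allocation probability. Your proof is in fact more complete than the paper's, which only establishes utility (and hence revenue) equivalence and leaves incentive compatibility of the constructed mechanism implicit; your explicit obedience and two-branch truthfulness argument fills that gap, with the minor caveat that the $\mathcal{D}$-deviation you should compare against is ``report $t_i'$ and mimic type $t_i'$'s continuation'' rather than ``always accept,'' though the resulting bound is identical.
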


\section{Problem Formulation}
In this section, we formulate the mechanism design problem as a mathematical program. Given a one-round mechanism $\mathcal{M}= \{( \pi^t, p^t) \}_{t\in T}$, the seller's expected revenue from using this mechanism is:
\begin{align}
 U_0=\int_{t\in T}\int_{q\in Q} \left[\sum_{i\in N} \pi(t, q, s_i) p_i(t_i) + \left(1- \sum_{i\in N}\pi(t, q, s_i) \right)r(q) \right] f(t) g(q) \,\mathrm{d}t \mathrm{d}q. \label{eq:obj}
\end{align}

The remainder of this section is devoted to establishing constraints on $\mathcal{M}$.

\paragraph{Probability constraints.} The seller has only a single item for sale, so $\pi$ must satisfy the following constraints:
\begin{align}
\label{eq:basic}
    \sum_{i \in N} \pi(t, q, s_i) \le 1 \quad \text{and} \quad \pi(t, q, s_i) \ge 0, \forall i, \forall t, \forall  q.
\end{align}


\paragraph{Individual rationality constraints.} Individual rationality constraints ensure that every buyer is willing to participate in the mechanism. When the seller sends signal $s_i$, any buyer other than $i$ gets utility 0. Thus, in this case, we can safely ignore the analysis for all other buyers. The expected utility of buyer $i$ with type $t_i$, if reporting truthfully and deciding to buy the item, is:
\begin{align}
\begin{aligned}
\label{eq:utility}
	U_i(t_i) = &\E_{t_{-i} \sim F_{-i}, g\sim G } [ [v_i(t_i, q) - p_i(t_i)]\pi(t, q, s_i) ] \\
	=&\int_{ T_{-i}} \int_{Q} \left[ v_i(t_i, q) - p_i(t_i) \right] \pi(t, q, s_i) g(q) f_{-i}(t_{-i}) \,\mathrm{d} q \mathrm{d} t_{-i}.
\end{aligned}
\end{align}
Obviously, buyer $i$ gets 0 if they do not participate in the mechanism. Therefore, to ensure IR constraints, we require:
\begin{align}
\label{eq:IR}
	U_i(t_i) \ge 0, \forall t_i \in T_i, i\in N.
\end{align}

\paragraph{Incentive compatibility constraints.} Recall that in our setting, incentive compatibility requires that all buyers both truthfully report their types and buy the item whenever asked. So we need two steps to ensure the IC constraints: one step for the obedience constraints and the other for the truthfulness constraints.

For the obedience constraints, we need to ensure that any buyer $i$ is willing to buy the item, assuming they truthfully reported their types earlier. A buyer cannot forcibly buy the item if the seller decides not to sell it to him. Therefore, when receiving a signal different from $s_i$, the utility of buyer $i$ can only be 0, and we only need to guarantee that buyer $i$ is willing to buy the item when receiving signal $s_i$. Upon receiving signal $s_i$, buyer $i$ with type $t_i$ will update his belief according to the Bayes rule:
\begin{align*}
	g(q|t_i, 1) = \frac{t_{-i} \in \int_{T_{-i}} \pi(t, q, s_i) g(q) f_{-i}(t_{-i}) \,\dd{t_{-i}}  }{ \int_{t_{-i} \in T_{-i}} \int_{q\in Q} \pi(t, q, s_i) g(q) f_{-i}(t_{-i}) \,\dd{q}\dd{t_{-i}}  }.
\end{align*}
Then the buyer decides to buy the item if the following is non-negative:
\begin{align*}
\int_{q\in Q} g(q|t_i, 1) v_i(t_i, q) \,\dd q - p_i(t_i) = \frac{\int_{t_{-i} \in T_{-i}} \int_{q\in Q} [ v_i(t_i, q) - p_i(t_i) ] \pi(t, q, s_i) g(q) f_{-i}(t_{-i}) \,\dd q \dd t_{-i} }{\int_{t_{-i} \in T_{-i}} \int_{q\in Q} \pi(t, q, s_i) g(q) f_{-i}(t_{-i}) \,\dd q\dd t_{-i}}.
\end{align*}
Interestingly, the numerator is exactly $U_i(t_i)$ and the above constraint is satisfied if and only if:
\begin{align}
	U_i(t_i) \ge 0, \forall t_i \in T_i, i\in N,
\end{align}
which turns out to be the same as the IR constraints.

As for the truthfulness constraints, we need to ensure that all buyers have no incentive to misreport their types. For any buyer $i$ with type $t_i$, if he misreports type $t_{i}'$ and decides to buy the item, his expected utility is:
\begin{align}
\begin{aligned}
    U_i(t_i'; t_i) = \int_{t_{-i} \in T_{-i}} \int_{q\in Q} [v_i(t_i, q) - p_i( t_i')] \cdot \pi( (t_{-i}, t_i'),q, s_i ) g(q) f_{-i}(t_{-i})\,\dd q \dd t_{-i}.
\end{aligned}
\end{align}
Note that $\pi((t_{-i}, t_i'), q, s_i)$ may not be obedient for buyer $i$. Thus, to guarantee truthfulness, we also need to ensure that truthful reporting leads to a higher expected utility than that of misreporting and not buying, which is 0. Combining the two cases, we have:
\begin{align*}
	U_i(t_i) \ge \max \{ U_i(t_i'; t_i), 0 \}, \forall t_i, t_i' \in T_i, i\in N.
\end{align*}
Note that $U_i(t_i) \ge 0$ is already satisfied by the IR constraints. Hence, the IC constraints can be simplified to:
\begin{align}
\label{eq:IC}
	U_i(t_i) \ge U_i(t_i'; t_i), \forall t_i, t_i' \in T_i,  i\in N.
\end{align}

Overall, the mechanism design problem for the seller can be formulated as the following optimization problem: 
\begin{maxi}
{\pi, p}
{\quad \quad \quad U_0 }
{\label{eq:LP}}
{}
\addConstraint{U_i(t_i)}{\ge 0, \quad}{\forall  t_i \in T_i, i\in N}
\addConstraint{U_i(t_i)}{\ge U_i(t_i'; t_i), \quad}{\forall  t_i, t_i'\in T_i, i\in N}
\addConstraint{ \sum_{i\in N} \pi(t, q, s_i)  }{\le 1, }{\forall t\in T, q\in Q}
\addConstraint{\pi(t, q, s_i)}{\ge 0, \quad}{\forall t\in T, q\in Q, i\in N}
\end{maxi}

\section{The Optimal Mechanism}
\label{sec:optimal}
In this section, we derive an optimal solution to Program \eqref{eq:LP} in a closed form. Before presenting results, we first define some useful functions and concepts.

\begin{definition}[Virtual Value Function \citep{myerson1981optimal}]
For any random variable $w_i$ with PDF $x_i(w_i)$ and CDF $X_i(w_i)$, the virtual value function is defined as:
\begin{align*}
	\phi_i(w_i) = w_i - \frac{1- X_i(w_i)}{x_i(w_i)}.
\end{align*}
\end{definition}
\begin{definition}[Regularity \citep{myerson1981optimal}]
We say a problem is regular, if the function $\phi_i(t_i)$ is monotone increasing in $t_i$, for all $i \in N$.
\end{definition}
This condition is a standard technical condition and is commonly seen in the literature. In Section \ref{sec:irregular}, we will extend our results to irregular distributions. 

Our solution falls into the following category of threshold mechanisms.
\begin{definition}[Threshold Mechanisms]
A mechanism $(\pi, p)$ is called a threshold mechanism if there exist $n$ pairs of functions $(\lambda_i,\eta_i)$, where $\lambda_i:T_i\mapsto \mathbb{R}$ and $\eta_i:T_{-i}\times Q\mapsto\mathbb{R}$, such that for any $t\in T$, $q\in Q$, and $i \in N$:
\begin{align*}
    \pi(t, q, s_i) = \begin{cases}
        1 \quad \text{if }  \lambda_i(t_i) \ge \eta_i(t_{-i},q)\\
	0 \quad \text{otherwise}
    \end{cases}.
\end{align*}
\end{definition}

In a threshold mechanism, its information structure $\pi$ is fully characterized by the set of function pairs $(\lambda_i,\eta_i)$. Note that the term ``threshold'' is only used to describe the information structure and does not have any restriction on payment functions. 

Now we are ready to present the optimal mechanism.
\begin{theorem}
\label{the:main}
If a problem is regular, then a threshold mechanism with the following threshold functions and payment functions forms an optimal mechanism:
\begin{gather*}
    \lambda_i(t_i) = \phi_i(t_i),\\
    \eta_i(t_{-i},q)=\max\left\{\max_{j\ne i}\{\phi_j(t_j)\},\frac{r(q)}{\alpha(q)}\right\},
\end{gather*}
\begin{align}
    p_i^*(t_i) = &\frac{1}{\int_{T_{-i}}\int_{Q} \pi^*(t, q, s_i)g(q)f_{-i}(t_{-i})\,\dd q \dd t_{-i}  }  \nonumber \\ 
	&\cdot \left[ \int_{T_{-i}}\int_{Q} v_i(t_i, q)\pi^*(t, q, s_i) g(q)f_{-i}(t_{-i})\,\dd q \dd t_{-i}  - \int_{\underline{t_i}}^{t_i} R_i^{\pi^*}(x)\,\dd x  \right].\label{eq:opt_buyer_payment}
\end{align}

\end{theorem}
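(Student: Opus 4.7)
The plan is to follow the classical Myerson recipe, treating the information structure $\pi$ as an allocation rule whose signals are contingent on the hidden quality $q$. First I would fix $\pi$ and characterize the payments forced by truthfulness. Writing $U_i(t_i) = t_i R_i^\pi(t_i) - p_i(t_i) Q_i^\pi(t_i)$, where $R_i^\pi(t_i) = \int_{T_{-i}}\int_Q \alpha(q)\pi(t,q,s_i)\,g(q) f_{-i}(t_{-i})\,\dd q\,\dd t_{-i}$ and $Q_i^\pi(t_i)$ is the analogous quantity without the $\alpha(q)$ weight, the linearity of $v_i(t_i,q)=\alpha(q) t_i$ in $t_i$ combined with the IC inequalities \eqref{eq:IC} yields the standard envelope identity $U_i(t_i) = U_i(\underline{t_i}) + \int_{\underline{t_i}}^{t_i} R_i^\pi(x)\,\dd x$ along with the necessity of $R_i^\pi$ being non-decreasing in $t_i$. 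Since $R_i^\pi\ge 0$ makes $U_i$ non-decreasing, IR collapses to $U_i(\underline{t_i})\ge 0$, and revenue maximization forces $U_i(\underline{t_i})=0$. Solving for $p_i(t_i) Q_i^\pi(t_i)$ and using $t_i R_i^\pi(t_i) = \int_{T_{-i}}\int_Q v_i(t_i,q)\pi(t,q,s_i) g(q) f_{-i}(t_{-i})\,\dd q\,\dd t_{-i}$ recovers exactly the payment formula \eqref{eq:opt_buyer_payment}.

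Next I would substitute these envelope-derived payments back into the objective \eqref{eq:obj}. The seller's expected receipts from buyer $i$ equal $\int_{T_i} p_i(t_i) Q_i^\pi(t_i) f_i(t_i)\,\dd t_i$, and a Fubini plus integration-by-parts step identical to Myerson's converts this into $\int_T\int_Q \phi_i(t_i)\alpha(q)\pi(t,q,s_i)\,g(q) f(t)\,\dd q\,\dd t$. Combining with the reserve term rewrites the objective as $\int_Q r(q) g(q)\,\dd q + \sum_{i\in N}\int_T\int_Q [\phi_i(t_i)\alpha(q) - r(q)]\pi(t,q,s_i)\,g(q) f(t)\,\dd q\,\dd t$. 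Now $\pi$ enters linearly; if I temporarily drop the monotonicity requirement and keep only the simplex-style constraints \eqref{eq:basic}, the problem decouples pointwise in $(t,q)$: for each fixed $(t,q)$, select at most one buyer $i$ and set $\pi(t,q,s_i)=1$ to maximize $\phi_i(t_i)\alpha(q) - r(q)$, setting all signals to $0$ whenever every such quantity is negative.

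Because $\alpha(q)>0$, this pointwise rule asks buyer $i$ precisely when $\phi_i(t_i) \ge \phi_j(t_j)$ for every $j \neq i$ and $\phi_i(t_i) \ge r(q)/\alpha(q)$, which is exactly the threshold form with $\lambda_i(t_i) = \phi_i(t_i)$ and $\eta_i(t_{-i},q) = \max\{\max_{j\ne i}\phi_j(t_j), r(q)/\alpha(q)\}$; ties have Lebesgue measure zero under the continuity assumptions on $f_i$ and $g$ and may be broken arbitrarily. \emph{The main obstacle, and the one place where regularity is essential, is closing the loop between the relaxation and the original IC problem:} I must verify that the candidate $\pi^*$ restores the monotonicity that was dropped during the pointwise optimization. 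Regularity makes each $\phi_i(t_i)$ monotone in $t_i$, so for any fixed $(t_{-i}, q)$ the indicator $\pi^*(t,q,s_i)$ is pointwise non-decreasing in $t_i$; integrating against the positive weight $\alpha(q) g(q) f_{-i}(t_{-i})$ preserves this, so $R_i^{\pi^*}$ is non-decreasing and the paired $p_i^*$ from \eqref{eq:opt_buyer_payment} genuinely implements IC (and IR). Since $(\pi^*, p^*)$ is feasible for Program \eqref{eq:LP} and attains the upper bound produced by the relaxation, it is optimal.
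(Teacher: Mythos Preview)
Your proposal is correct and follows essentially the same Myerson-style approach as the paper: the envelope characterization you derive is the paper's Lemma~\ref{lem:property}, the rewritten objective is Lemma~\ref{lemma:rewrite obj}, the pointwise maximization is the core of the paper's proof of Theorem~\ref{the:main}, and your final monotonicity check under regularity is Lemma~\ref{lem:feasible}. The only cosmetic difference is ordering---the paper establishes feasibility of $(\pi^*,p^*)$ before the optimality argument, whereas you relax, optimize, and then verify feasibility---but the substance is identical.
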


In the optimal mechanism, the seller will only ask buyer $i$ to consider purchasing the item when the virtual value of buyer $i$ is the largest among all buyers and is also greater than $\frac{r(q)}{\alpha(q)}$. Therefore, whenever the seller sends signal $s_i$, buyer $i$ knows that his virtual value is greater than that of others and more importantly that the $\frac{r(q)}{\alpha(q)}$ is lower than his virtual value. With this information, buyer $i$ can update his belief about $q$ and make the purchase decisions accordingly. However, for those buyers who receive a ``not buy'' signal, they cannot conclude that their virtual value is less than the item's quality, since it is possible that other buyers have a greater virtual value than him.

While our optimal mechanism shares similarities with the Myerson auction mechanism, we would like to emphasize the fundamental distinctions in our mechanism. In our setting, the seller also possesses private information, and the seller's actions may directly or indirectly reveal this information, as the seller may utilize this private information to optimize revenue. Additionally, the seller's private information can significantly change how the buyers value the item. Consequently, the seller's decision to sell the item to a specific buyer may convey information that even dissuades the buyer from purchasing the item. To mitigate this adverse effect of information transmission, the seller must carefully design both the information structure and the pricing mechanism.

The remainder of this section is devoted to the derivation of the optimal mechanism. We first characterize the mechanism space we consider, and then re-write the revenue of a mechanism in a way that can be easily optimized.
\subsection{Characterization of Feasible Mechanisms}
We call a mechanism $(\pi, p)$ \emph{feasible} if it satisfies all the constraints in Program \eqref{eq:LP}. Before we characterize the feasible space, we define the following quantity:
\begin{align*}
	R_i^{\pi}(t_i) = \int_{t_{-i} \in T_{-i}} \int_{q\in Q} \alpha(q) \pi(t, q, s_i) g(q) f_{-i}(t_{-i}) \,\dd q \dd t_{-i}.
\end{align*}
In fact, $R_i^{\pi}(t_i)$ can be seen as the expected probability, \emph{weighted} by $\alpha(q)$, that buyer $i$ of type $t_i$ is asked to buy the item given information structure $\pi$. 

\begin{lemma}
\label{lem:property}
A mechanism $(\pi, p)$ is feasible if and only if it satisfies the following constraints, for all $i\in N$:
\begin{align}
    & R^{\pi}_i(t_i) \text{ is monotone non-decreasing in $t_i$}. \label{eq:monotone}\\
    & U_i(t_i) = U_i(\underline{t_i}) + \int_{\underline{t_i}}^{t_i} R_i^{\pi}(x) \,\dd x \label{eq:IC property} \\
    & U_i(\underline{t_i}) \ge 0 \label{eq:IR property} \\
    &\sum_{i \in N} \pi(t, q, s_i) \le 1 \quad \text{and} \quad \pi(t, q, s_i) \ge 0.\label{eq:basic property}
\end{align}
\end{lemma}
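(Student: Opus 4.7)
The plan is to adapt the standard Myerson-style envelope argument to this setting, exploiting the linearity $v_i(t_i,q)=\alpha(q)t_i$ so that $\alpha(q)$ can be absorbed into the quality-weighted allocation $R_i^\pi(t_i)$ exactly as ordinary allocation probability is absorbed in the classic proof. Throughout I will work with the identity
\begin{align*}
U_i(t_i'; t_i) \;=\; t_i\,R_i^\pi(t_i') - p_i(t_i')\cdot P_i^\pi(t_i'), \qquad\text{where } P_i^\pi(s)=\int_{T_{-i}}\!\int_{Q}\pi((t_{-i},s),q,s_i)\,g(q)f_{-i}(t_{-i})\,\dd q\dd t_{-i},
\end{align*}
which, combined with $U_i(t_i')=U_i(t_i';t_i')$, immediately gives the useful decomposition $U_i(t_i';t_i)=U_i(t_i')+(t_i-t_i')R_i^\pi(t_i')$. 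This one-line algebraic reduction is the only place where the specific form of $v_i$ is used.

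For the \emph{necessity} direction, I assume $(\pi,p)$ is feasible. Combining the IC inequalities $U_i(t_i)\ge U_i(t_i';t_i)$ and $U_i(t_i')\ge U_i(t_i;t_i')$ via the decomposition above yields, for $t_i>t_i'$, the sandwich $(t_i-t_i')R_i^\pi(t_i')\le U_i(t_i)-U_i(t_i')\le (t_i-t_i')R_i^\pi(t_i)$, which directly gives monotonicity \eqref{eq:monotone}. The same sandwich shows that $U_i$ is Lipschitz (hence absolutely continuous) with $U_i'(t_i)=R_i^\pi(t_i)$ almost everywhere; integrating from $\underline{t_i}$ produces the envelope identity \eqref{eq:IC property}. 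Constraint \eqref{eq:IR property} is just the IR condition at the lowest type, and \eqref{eq:basic property} is carried over verbatim from \eqref{eq:basic}.

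For the \emph{sufficiency} direction, I assume \eqref{eq:monotone}--\eqref{eq:basic property} hold. IR at an arbitrary $t_i$ follows because $U_i(t_i)=U_i(\underline{t_i})+\int_{\underline{t_i}}^{t_i}R_i^\pi(x)\,\dd x$ is the sum of two non-negative terms (using $\alpha(q)>0$ and $\pi\ge 0$ for the second). For IC, suppose $t_i>t_i'$; then by \eqref{eq:IC property} and monotonicity,
\begin{align*}
U_i(t_i)-U_i(t_i') \;=\; \int_{t_i'}^{t_i} R_i^\pi(x)\,\dd x \;\ge\; (t_i-t_i')\,R_i^\pi(t_i') \;=\; U_i(t_i';t_i) - U_i(t_i'),
\end{align*}
which rearranges to $U_i(t_i)\ge U_i(t_i';t_i)$. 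The symmetric argument handles $t_i<t_i'$ by upper-bounding $\int_{t_i}^{t_i'}R_i^\pi$ by $(t_i'-t_i)R_i^\pi(t_i')$. Finally, the obedience constraint reduces to $U_i(t_i)\ge 0$, already established.

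I do not foresee a real obstacle here; the only subtle point is being careful that the decomposition $U_i(t_i';t_i)=U_i(t_i')+(t_i-t_i')R_i^\pi(t_i')$ uses the linearity of $v_i$ in $t_i$ so that the $\alpha(q)$-weight slides out cleanly, and that $R_i^\pi$ (rather than the plain allocation probability $P_i^\pi$) is the correct ``virtual'' allocation object for both the monotonicity and the envelope. Everything else is a direct translation of Myerson's lemma.
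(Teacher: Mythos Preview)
Your proposal is correct and follows essentially the same route as the paper: both hinge on the decomposition $U_i(t_i';t_i)=U_i(t_i')+(t_i-t_i')R_i^\pi(t_i')$, derive the sandwich inequality to obtain monotonicity and the envelope identity, and then reverse the argument for sufficiency. The only cosmetic differences are that you invoke Lipschitz/absolute continuity explicitly where the paper simply takes the limit $t_i\to t_i'$, and you introduce the auxiliary $P_i^\pi$ notation, neither of which changes the substance.
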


We can see that constraint \eqref{eq:monotone} is quite similar to the allocation monotonicity condition in the standard auction setting. However, the distinction lies in that in our setting, the valuation of an item not only depends on the buyer's type but also on the item's quality $q$. Specifically, the valuation function has a coefficient $\alpha(q)$.

Now, we show that the mechanism described in Theorem \ref{the:main} is feasible.
\begin{lemma}
\label{lem:feasible}
    For a regular problem, mechanism $(\pi^*, p^*)$ described in Theorem \ref{the:main} is feasible.
\end{lemma}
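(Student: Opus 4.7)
The plan is to verify the four equivalent feasibility conditions \eqref{eq:monotone}--\eqref{eq:basic property} of Lemma \ref{lem:property} for the candidate mechanism $(\pi^*, p^*)$. Since the mechanism is given in closed form, each condition should reduce to a direct calculation; the main care will be needed in handling the division in \eqref{eq:opt_buyer_payment} when the denominator vanishes.

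First I would handle the probability constraint \eqref{eq:basic property}. By construction $\pi^*(t,q,s_i)\in\{0,1\}$, so nonnegativity is immediate. For the sum constraint, note that $\pi^*(t,q,s_i)=1$ requires $\phi_i(t_i)\ge\max_{j\ne i}\phi_j(t_j)$; with ties broken in a fixed way (e.g.\ by lowest index), at most one buyer can satisfy this strict-plus-tiebreak condition simultaneously, so $\sum_i \pi^*(t,q,s_i)\le 1$.

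Next I would verify the monotonicity condition \eqref{eq:monotone}. By regularity, $\phi_i$ is monotone non-decreasing in $t_i$. Therefore, holding $t_{-i}$ and $q$ fixed, the indicator $\I[\phi_i(t_i)\ge\eta_i(t_{-i},q)]$, which equals $\pi^*(t,q,s_i)$, is non-decreasing in $t_i$. Since $\alpha(q)>0$, $g(q)\ge 0$, and $f_{-i}(t_{-i})\ge 0$, integrating preserves monotonicity, so $R^{\pi^*}_i(t_i)$ is non-decreasing in $t_i$.

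The remaining two conditions \eqref{eq:IC property} and \eqref{eq:IR property} will fall out of a direct computation of $U_i(t_i)$ using the payment rule \eqref{eq:opt_buyer_payment}. Let $P_i(t_i):=\int_{T_{-i}}\int_Q \pi^*(t,q,s_i)g(q)f_{-i}(t_{-i})\,\dd q\,\dd t_{-i}$ and $V_i(t_i):=\int_{T_{-i}}\int_Q v_i(t_i,q)\pi^*(t,q,s_i)g(q)f_{-i}(t_{-i})\,\dd q\,\dd t_{-i}$. When $P_i(t_i)>0$, plugging $p_i^*(t_i)=[V_i(t_i)-\int_{\underline{t_i}}^{t_i}R_i^{\pi^*}(x)\,\dd x]/P_i(t_i)$ into \eqref{eq:utility} yields $U_i(t_i)=V_i(t_i)-p_i^*(t_i)P_i(t_i)=\int_{\underline{t_i}}^{t_i}R_i^{\pi^*}(x)\,\dd x$. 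When $P_i(t_i)=0$, buyer $i$ is never asked and $U_i(t_i)=0$; here we interpret $p_i^*(t_i)$ as arbitrary (say $0$), and by the monotonicity of $R_i^{\pi^*}$ together with $R_i^{\pi^*}(t_i)=0$ in this regime, the integral $\int_{\underline{t_i}}^{t_i}R_i^{\pi^*}(x)\,\dd x$ also vanishes, so the identity $U_i(t_i)=\int_{\underline{t_i}}^{t_i}R_i^{\pi^*}(x)\,\dd x$ still holds. Taking $t_i=\underline{t_i}$ gives $U_i(\underline{t_i})=0\ge 0$, establishing \eqref{eq:IR property}, and the general identity establishes \eqref{eq:IC property}.

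The only nontrivial bookkeeping will be making sure the degenerate case $P_i(t_i)=0$ is handled consistently; once that is pinned down, the four conditions of Lemma \ref{lem:property} all hold, so by that lemma $(\pi^*,p^*)$ is feasible.
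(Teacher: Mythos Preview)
Your proposal is correct and follows essentially the same route as the paper's proof: verify the equivalent feasibility conditions of Lemma~\ref{lem:property} by (i) using regularity to get monotonicity of $\pi^*$ in $t_i$ and hence of $R_i^{\pi^*}$, and (ii) plugging $p_i^*$ into \eqref{eq:utility} to obtain $U_i(t_i)=\int_{\underline{t_i}}^{t_i}R_i^{\pi^*}(x)\,\dd x$, which gives \eqref{eq:IC property} and $U_i(\underline{t_i})=0$. Your version is in fact more careful than the paper's, since you explicitly check the probability constraint \eqref{eq:basic property} (including tie-breaking) and handle the degenerate case $P_i(t_i)=0$ where the payment formula's denominator vanishes.
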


\subsection{Deriving the Optimal Mechanism}
With Lemma \ref{lem:property}, our goal becomes to design a mechanism that maximizes Equation \eqref{eq:obj}, subject to the constraints listed in Lemma \ref{lem:property}. Before deriving the optimal solution, we need to re-write the objective function (Equation \eqref{eq:obj}). 
\begin{lemma}
\label{lemma:rewrite obj}
	For any feasible mechanism $(\pi, p)$, the seller's objective function \eqref{eq:obj} can be written as: 
\begin{align}
\label{eq:rewrite obj}
U_0 =& -\sum_{i\in N}U_i(\underline{t_i}) + \int_{q\in Q} r(q) g(q)\,\dd q \\
 &+\int_{t\in T} \int_{q\in Q}  \sum_{i\in N}  \alpha(q) \pi(t, q, s_i)  \bigg[t_i- \frac{1-F_i(t_i)}{f_i(t_i)} - \frac{r(q)}{\alpha(q)} \bigg]  g(q) f(t) \,\dd q \dd t \nonumber .
\end{align}
\end{lemma}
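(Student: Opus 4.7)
The plan is to start from the definition \eqref{eq:obj} of $U_0$, split it into the payment term $\sum_i \int_T\int_Q \pi(t,q,s_i)p_i(t_i) f(t)g(q)\,\dd t\dd q$ and the retention term, and re-express the payments in terms of the interim utilities $U_i$ using the IC characterization \eqref{eq:IC property} provided by Lemma \ref{lem:property}. The retention term immediately contributes $\int_Q r(q)g(q)\,\dd q$ plus a piece $-\int_T\int_Q \sum_i \pi(t,q,s_i) r(q) f(t)g(q)\,\dd q\dd t$ which will eventually combine with the payment term to produce the $-r(q)/\alpha(q)$ summand in the virtual-surplus bracket.

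For the payment term, I would use the definition \eqref{eq:utility} of $U_i(t_i)$, which after factoring out $p_i(t_i)$ lets me solve for the ``interim expected payment'' $p_i(t_i)\cdot Q_i(t_i)$, where $Q_i(t_i):=\int_{T_{-i}}\int_Q \pi(t,q,s_i)g(q)f_{-i}(t_{-i})\,\dd q\dd t_{-i}$. Taking expectation over $t_i\sim F_i$ turns the payment term for buyer $i$ into
\[
\int_{T_i} f_i(t_i)\!\int_{T_{-i}}\!\!\int_Q v_i(t_i,q)\pi(t,q,s_i) g(q) f_{-i}(t_{-i})\,\dd q\dd t_{-i}\dd t_i \;-\; \int_{T_i} f_i(t_i) U_i(t_i)\,\dd t_i.
\]
Since $v_i(t_i,q)=\alpha(q) t_i$, the first piece is already $\int_T\int_Q \alpha(q) t_i \pi(t,q,s_i)f(t)g(q)\,\dd q\dd t$, matching the $t_i$-term on the right-hand side of \eqref{eq:rewrite obj}.

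The key step is handling $\int_{T_i} f_i(t_i) U_i(t_i)\,\dd t_i$. Substituting \eqref{eq:IC property} gives $U_i(\underline{t_i})$ plus $\int_{T_i} f_i(t_i)\int_{\underline{t_i}}^{t_i} R_i^{\pi}(x)\,\dd x\,\dd t_i$. A Fubini swap turns the inner integral into $\int_{\underline{t_i}}^{\bar t_i} R_i^{\pi}(x)(1-F_i(x))\,\dd x$; unfolding the definition of $R_i^{\pi}$ and pushing $(1-F_i(x))/f_i(x)$ inside a joint integral against $f_i$ expresses this as
\[
\int_T\!\!\int_Q \alpha(q)\pi(t,q,s_i)\frac{1-F_i(t_i)}{f_i(t_i)} f(t)g(q)\,\dd q\dd t,
\]
which supplies the $-(1-F_i(t_i))/f_i(t_i)$ summand inside the virtual-surplus bracket and introduces the $-U_i(\underline{t_i})$ term outside. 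Summing over $i$ and combining with the retention term completes the rewriting.

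The only nontrivial step is the Fubini swap, but it is justified on the rectangular domain $T_i\times T_i$ with the nonnegative integrand $f_i(t_i)R_i^{\pi}(x)\mathbf{1}[x\le t_i]$; the rest of the argument is careful bookkeeping together with the linearity $v_i(t_i,q)=\alpha(q)t_i$. I do not anticipate any essential obstacle beyond keeping track of the indices.
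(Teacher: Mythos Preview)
Your proposal is correct and follows essentially the same approach as the paper's proof: both add and subtract $v_i(t_i,q)$ to split off $-\int_{T_i} U_i(t_i)f_i(t_i)\,\dd t_i$, substitute the IC identity \eqref{eq:IC property}, swap the order of integration to produce the $(1-F_i(t_i))/f_i(t_i)$ factor, and unfold $R_i^{\pi}$ using $v_i(t_i,q)=\alpha(q)t_i$. The only cosmetic difference is that the paper first groups the integrand into the three pieces $\pi[p_i-v_i]$, $\pi[v_i-r]$, and $r$, whereas you first separate the payment and retention terms and then recombine; the computations coincide.
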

\begin{proof}
Based on Equation \eqref{eq:obj}, we can rewrite the seller's objective function as:
\begin{align}
\begin{aligned}
\label{eq:U_0}
	U_0 =& \sum_{i\in N} \int_{t\in T}\int_{q\in Q} \pi(t, q, s_i) [p_i(t_i) - v_i(t_i, q)] f(t) g(q)\,\dd t \dd q \\
	&+ \sum_{i\in N} \int_{t\in T}\int_{q\in Q} \pi(t, q, s_i) [v_i(t_i, q) - r(q)]  f(t) g(q)\,\dd t \dd q \\
	&+ \int_{t\in T}\int_{q\in Q} r(q) f(t) g(q) \,\dd t \dd q.
\end{aligned}
\end{align}
Using Equation \eqref{eq:utility} and \eqref{eq:IC property}, we know that:
\begin{align}
\begin{aligned}
\label{eq:obj-1}
	&\int_{t\in T}\int_{q\in Q}\pi(t, q, s_i) [p_i(t_i) - v_i(t_i, q)] f(t) g(q)\,\dd t\dd q\\
	=& \int_{\underline{t_i}}^{\bar{t_i}} -U_i(t_i) f_i(t_i) \,\dd t_i \\
	=& - \int_{\underline{t_i}}^{\bar{t_i}} \left(U_i(\underline{t_i}) + \int_{\underline{t_i}}^{t_i} R_i^{\pi}(x) \,\dd x \right) f_i(t_i)\,\dd t_i\\
	=& -U_i(\underline{t_i}) - \int_{\underline{t_i}}^{\bar{t_i}}\int_{\underline{t_i}}^{t_i} R_i^{\pi}(x) f_i(t_i)\,\dd x \dd t_i\\
	=& -U_i(\underline{t_i}) - \int_{\underline{t_i}}^{\bar{t_i}} (1-F_i(x))R_i^{\pi}(x)\,\dd x\\
	=& -U_i(\underline{t_i}) - \int_{t\in T}(1- F_i(t_i)) \alpha(q) \pi(t, q, s_i)g(q) f_{-i}(t_{-i}) \,\dd q\dd t.
\end{aligned}
\end{align}

By definition, we have:
\begin{align}
\begin{aligned}
\label{eq:obj-2}
	v_i(t_i, q) - r(q) = \alpha(q) t_i - r(q)
\end{aligned}	
\end{align}

Plugging Equation \eqref{eq:obj-1} and \eqref{eq:obj-2} back into Equation \eqref{eq:U_0} gives:
\begin{align*}
	U_0 =& -\sum_{i\in N}U_i(\underline{t_i}) + \int_{t\in T} \int_{q\in Q}  \sum_{i\in N}  \alpha(q) \pi(t, q, s_i)  \\
	& \cdot \left[t_i - \frac{1-F_i(t_i)}{f_i(t_i)} - \frac{r(q)}{\alpha(q) } \right]   g(q) f(t) \,\dd q \dd t + \int_{q\in Q} r(q) g(q)\,\dd q
\end{align*}
\end{proof}

To show that the mechanism described in Theorem \ref{the:main} is optimal, we need to prove that it maximizes Equation \eqref{eq:rewrite obj}. 
\begin{proof}[Proof of Theorem \ref{the:main}]
The revenue equation \eqref{eq:rewrite obj} contains 3 terms, where the second term has nothing to do with mechanism, but only with the seller's value and the distribution of $q$. Thus we only need to show that the mechanism $(\pi^*, p^* )$ maximizes the first and third term in revenue equation \eqref{eq:rewrite obj} at the same time.

By definition, $\pi^*(t, q, s_i) = 1$ if and only if:
\begin{align*}
	\phi_i(t_i) \ge \frac{r(q)}{\alpha(q)} \quad \text{and} \quad \phi_i(t_i) \ge \max_{j\ne i}\{\phi_j(t_j)\}, 
\end{align*}
which means that the third term is point-wisely optimized for all $(t, q)$ pairs, and thus this term is maximized.

For the first term, if we want to maximize the revenue, we need to minimize the term $U_i(\underline{t_i})$. According to Lemma \ref{lem:property}, any feasible mechanism should ensure $U_i(\underline{t_i})$ to be non-negative. The proof of Lemma \ref{lem:feasible} shows that the mechanism $(\pi^*, p^*)$ induces $U_i(\underline{t_i}) = 0$, which also optimizes this term.

Overall, the mechanism $(\pi^*, p^*)$ optimizes these two terms at the same time, so it is an optimal mechanism. 
\end{proof}



\section{Generalizations} \label{sec:generalization}
In this section, we show that our results can be extended to a broader class of distributions and valuation functions.

\subsection{Irregular Distributions}
\label{sec:irregular}
The optimal mechanism relies crucially on the regularity condition, i.e., the monotonicity of $\phi_i(t_i)$. Without this condition, the optimal mechanism would not even be feasible as it might violate constraint \eqref{eq:monotone}. Next, we extend our results to irregular cases. 

When the problem is irregular, we need to focus on the last term of the seller's revenue \eqref{eq:rewrite obj}:
\begin{align}
\label{eq:second term}
	\int_{t\in T} \int_{q\in Q}  \sum_{i\in N}  \alpha(q) \pi(t, q, s_i) \left[\phi_i(t_i) - \frac{r(q)}{\alpha(q) } \right]   \,\dd G(q) \dd F(t).
\end{align}
In irregular problems, functions $\phi_i(t_i)$ are no longer monotone increasing. We apply the ironing technique~\citep{myerson1981optimal} to make it monotone increasing in $t_i$. We provide a formal definition of this ironing technique in the appendix. 

By applying the ironing technique on function $\phi_i(t_i)$, we obtain an ironed function, denoted by $\bar{ \phi_i}(t_i)$. Now we present the optimal mechanism for the general case, which also has a threshold structure.
\begin{theorem}
\label{the:general}
For any irregular case, the threshold mechanism with the following threshold functions and payment functions is an optimal mechanism:
\begin{gather*}
\lambda_i(t_i) = \bar{\phi_i}(t_i),\\
\eta_i(t_{-i},q)=\max\left\{\max_{j\ne i}\{\bar{\phi}_j(t_j)\},\frac{r(q)}{\alpha(q)}\right\},
\end{gather*}

\begin{align}
    p_i^*(t) = &\frac{1}{\int_{T_{-i}}\int_{Q} \pi^*(t, q, s_i)g(q)f_{-i}(t_{-i})\,\dd q \dd t_{-i}  } \nonumber \\ 
	&\cdot \left[ \int_{T_{-i}}\int_{Q} v_i(t_i, q)\pi^*(t, q, s_i) g(q)f_{-i}(t_{-i})\,\dd q \dd t_{-i}  - \int_{\underline{t_i}}^{t_i} R_i^{\pi^*}(x)\,\dd x  \right].\label{eq:general_opt_payment}
\end{align}
\end{theorem}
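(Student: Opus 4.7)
The plan is to mirror the proof of Theorem~\ref{the:main} but replace the pointwise-optimization step with Myerson's ironing argument, since in the irregular case $\phi_i$ fails to be monotone, so the regular-case construction would in general violate the feasibility condition \eqref{eq:monotone}. I would start from the revenue decomposition in Lemma~\ref{lemma:rewrite obj}: the second term is a constant independent of the mechanism, and minimizing $U_i(\underline{t_i})$ is handled exactly as before. The only new work concerns the third term
\[
\int_{t\in T}\int_{q\in Q}\sum_{i\in N}\alpha(q)\pi(t,q,s_i)\Bigl[\phi_i(t_i)-\tfrac{r(q)}{\alpha(q)}\Bigr]g(q)f(t)\,\dd q\,\dd t.
\]

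Next, I would introduce the ironed virtual value $\bar{\phi_i}$ (defined as in the appendix via the derivative of the concave envelope of $\Phi_i(t_i)=\int_{\underline{t_i}}^{t_i}\phi_i(s)f_i(s)\,\dd s$, divided by $f_i$) and establish the standard ironing inequality: for any feasible mechanism $(\pi,p)$, since $R_i^{\pi}(t_i)$ is monotone non-decreasing in $t_i$ by Lemma~\ref{lem:property}, integration by parts yields
\[
\int_{T_i}\phi_i(t_i)R_i^{\pi}(t_i)f_i(t_i)\,\dd t_i\le\int_{T_i}\bar{\phi_i}(t_i)R_i^{\pi}(t_i)f_i(t_i)\,\dd t_i,
\]
with equality if and only if $R_i^{\pi}$ is constant on every maximal interval on which $\bar{\phi_i}$ is constant (the ``ironed'' intervals). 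Applying this after Fubini, the third term is upper-bounded by the same expression with $\phi_i$ replaced by $\bar{\phi_i}$, which is then pointwise maximized across $(t,q)$ exactly by the proposed $\pi^*$, that is, by setting $\pi^*(t,q,s_i)=1$ iff $\bar{\phi_i}(t_i)$ is the largest among $\{\bar{\phi_j}(t_j)\}_{j\in N}$ and also exceeds $r(q)/\alpha(q)$.

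I then need to close the loop by checking two things for $(\pi^*,p^*)$. First, feasibility: $\bar{\phi_i}$ is non-decreasing by construction, so raising $t_i$ can only enlarge the set of $(t_{-i},q)$ satisfying the threshold condition, hence $R_i^{\pi^*}(t_i)$ is monotone non-decreasing in $t_i$, which together with the payment formula \eqref{eq:general_opt_payment} recovers $U_i(\underline{t_i})=0$ and the envelope identity \eqref{eq:IC property}, so all constraints of Lemma~\ref{lem:property} hold (this parallels the proof of Lemma~\ref{lem:feasible}). Second, tightness of the ironing inequality for $\pi^*$: on any ironed interval, $\bar{\phi_i}(t_i)$ is constant, so for each fixed $(t_{-i},q)$ the indicator $\pi^*(t,q,s_i)$ does not depend on $t_i$ within that interval, which makes $R_i^{\pi^*}(t_i)$ constant there and thus satisfies the equality condition. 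Combined with the minimization of $U_i(\underline{t_i})$, this shows $(\pi^*,p^*)$ attains the upper bound and is therefore optimal.

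The main obstacle is the equality/tightness step, i.e., verifying that the ironing inequality becomes an identity for $\pi^*$. It requires unpacking the exact structure of the ironed intervals and arguing that the threshold rule based on $\bar{\phi_i}$ automatically produces an allocation that is constant in $t_i$ on each such interval; without this, one only obtains an upper bound on revenue and not optimality. The rest of the argument is largely a reparameterization of the regular-case proof with $\bar{\phi_i}$ in place of $\phi_i$.
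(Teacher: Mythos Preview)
Your proposal is correct and takes essentially the same approach as the paper: both arguments replace $\phi_i$ by $\bar{\phi_i}$ in the third term of Lemma~\ref{lemma:rewrite obj} via the Myerson integration-by-parts identity, maximize the ironed expression pointwise with the threshold rule, and then verify that the residual term $-\int (H_i-L_i)\,\dd R_i^{\pi^*}$ vanishes because $R_i^{\pi^*}$ is constant on every ironed interval. The only discrepancy is terminological: the appendix takes the \emph{convex hull} of $H_i$, not the concave envelope of $\Phi_i$, but since you explicitly defer to the appendix's construction and your stated inequality has the correct sign, this does not affect the argument.
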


\begin{proof}
By definition, we have $h_i(F_i(t_i)) = \phi_i(t_i) $, $l_i(F_i(t_i)) = \bar{\psi_i}(t_i) $. So Equation \eqref{eq:second term} can be rewritten as:
\begin{align*}
	&\int_T \int_Q  \sum_{i\in N}  \alpha(q) \pi(t, q, s_i) g(q) f(t) \left[\phi_i(t_i) - \frac{r(q)}{\alpha(q)} \right]  \,\dd q \dd t \\
	=& \int_T \int_Q  \sum_{i\in N}  \alpha(q) \pi(t, q, s_i) g(q) f(t) \left[\bar{ \phi_i}(t_i) - \frac{r(q)}{\alpha(q)} \right]    \,\dd q \dd t \\
	& + \int_T \int_Q  \sum_{i\in N}  \alpha(q) \pi(t, q, s_i) g(q) f(t)  \cdot [  h_i(F_i(t_i)) - l_i(F_i(t_i))  ]      \,\dd q \dd t 
\end{align*}
We can simplify the second term above via integration by parts:
\begin{align*}
	&\int_T \int_Q  \sum_{i\in N}  \alpha(q) \pi(t, q, s_i) g(q) f(t) [  h_i(F_i(t_i)) - l_i(F_i(t_i))  ]    \,\dd q \dd t\\
	=& \int_{\underline{t_i} }^{\bar{t_i}} [h_i(F_i(t_i)) - l_i(F_i(t_i))]R_i^{\pi}(t_i) \,\dd F_i(t_i)\\
	=& [H_i(F_i(t_i)) - L_i(F_i(t_i))] R_i^{\pi}(t_i)   |_{\underline{t_i}}^{\bar{t_i}} - \int_{\underline{t_i}}^{\bar{t_i}} [H_i(F_i(t_i)) - L_i(F_i(t_i))] \,\dd R_i^{\pi}(t_i).
\end{align*}
By definition, $L_i$ is the convex full of $H_i$, then we have $L_i(0) = H_i(0)$ and $L_i(1) = H_i(1)$. Therefore, the first term is 0, and Equation \eqref{eq:second term} becomes:
\begin{align}
\label{eq:second rewrite}
	&\int_T \int_Q  \sum_{i\in N}  \alpha(q) \pi(t, q, s_i) g(q) f(t) \left[\bar{ \phi_i}(t_i)- \frac{r(q)}{\alpha(q) } \right]    \,\dd q \dd t - \int_{\underline{t_i}}^{\bar{t_i}} [H_i(F_i(t_i)) - L_i(F_i(t_i))] \,\dd R_i^{\pi}(t_i).
\end{align}

To prove the optimality, it suffices to show that the mechanism $(\pi^*, p^*)$ maximizes all the terms in Equation \eqref{eq:rewrite obj} simultaneously. 

Note that the payment function is the same as in Theorem \ref{the:main}, so according to the proof of Theorem \ref{the:main}, the first term $U_i(t_i) = 0$, implying this term is optimized.

For the third term, it can be transferred to Equation \eqref{eq:second rewrite} based on the above analysis. 

Now we show that $(\pi^*, p^*)$ maximizes Equation \eqref{eq:second rewrite}. The first term is maximized by $\pi^*$, since $\pi^*(t, q, s_i) = 1$ if and only if:
\begin{align*}
	\bar{\phi_i}(t_i) \ge \frac{r(q)}{\alpha(q) } \quad \text{and} \quad \bar{\phi_i}(t_i) \ge \max_{j\ne i}\{ \bar{\phi_j}(t_j)\}.
\end{align*}

For the second term of Equation \eqref{eq:second rewrite}, it is worth noting that $H_i(F_i(t_i)) - L_i(F_i(t_i)) \ge 0$ since $L_i(\omega)$ is the convex hull of $H_i(\omega)$. Additionally, the ironed function is monotone increasing in $t_i$, so $\pi^*(t, q, s_i)$ is also monotone increasing in $t_i$. Therefore, $R_i^{\pi^*}(t_i)$ is monotone increasing in $t_i$. This means that $\dd R_i^{\pi^*}(t_i) $ is always non-negative. Therefore, in order to show that this term is maximized, it suffices to prove that this term is equal to 0. Actually, it is only necessary to consider the cases where $H_i(F_i(t_i)) - L_i(F_i(t_i)) >0$. In such cases, $t_i$ must fall within an ironed interval $I$, thus function $L_i(\omega)$ is linear in the interval $I$. This implies $l_i(\omega)= \bar{\phi_i}(t_i)$ is a constant and thus $R_i^{\pi^*}(t_i) $ is also a constant in interval $I$, leading to $\dd R_i^{\pi^*}(t_i) = 0$. 

The last term in Equation \eqref{eq:rewrite obj} is a constant given a problem instance. Overall, the mechanism $(\pi^*, p^*)$ maximizes all the terms in Equation \eqref{eq:rewrite obj} simultaneously, proving it to be an optimal mechanism.
\end{proof}

\subsection{General Utility Functions}
We have characterized the optimal mechanism with value functions that are linear and monotone increasing in $t_i$ for all $i\in N$. In this section, we consider a broader class of value functions and show that our results can be easily extended to this case, subject to the following assumptions.
\begin{assumption}
\label{cond:convex}
For any $i\in N$ and any $q\in Q$, when viewed as a function of $t_i$, $v_i(t_i, q)$ is convex and monotone non-decreasing.
\end{assumption}

\begin{assumption}
\label{cond:monotone}
For any $i\in N$ and any $q\in Q$, when viewed as a function of $t_i$, the following function is monotone non-decreasing:
\begin{gather*}
    \phi_i(t_i) = \frac{v_i(t_i, q)}{\frac{\partial v_i(t_i, q)}{ \partial t_i}} - \frac{1- F_i(t_i)}{f_i(t_i)}.
\end{gather*}
\end{assumption}
Based on Assumption \ref{cond:convex}, we can also provide a characterization of feasible mechanisms, similar to Lemma \ref{lem:property}, but with the following re-defined $R_i^{\pi}(t_i)$:
\begin{align*}
	R_i^{\pi}(t_i) = \int_{T_{-i}} \int_{Q} \frac{\partial v_i(t_i, q) }{ \partial t_i} \pi(t, q, s_i) g(q) f_{-i}(t_{-i}) \,\dd q \dd t_{-i}.
\end{align*}

The convexity condition in Assumption \ref{cond:convex} is crucial for ensuring truthfulness. The techniques used in Lemma \ref{lem:property} can be applied in a similar way, allowing us to omit the proof for brevity. Based on the characterization, we can rewrite the revenue of the seller, similar to Lemma \ref{lemma:rewrite obj}. Here we present the term that needs to be maximized point-wise:
\begin{align*}
    v_i(t_i, q) - \frac{\partial v_i(t_i, q) }{ \partial t_i} \frac{1- F(t_i)}{f(t_i)} - r(q).
\end{align*}
The main purpose of introducing Assumption \ref{cond:monotone} is to avoid the necessity of using the ironing procedure when maximizing the above equation. Thus we can maximize the revenue equation  point-wise by a threshold mechanism, without violating the monotonicity constraints on the re-defined quantity $R_i^{\pi}(t_i)$. 

\section{Discussions}\label{sec:dis}
In this section, we briefly discuss the optimal mechanism under different information discrimination settings.
\subsection{Information Discrimination}
The optimal mechanism features information discrimination: the winner will be aware that his virtual value is greater than $\frac{r(q)}{\alpha(q)}$. It is worth considering what the optimal mechanism would be when information discrimination is not permitted. We define the following two degrees of information discrimination.


\begin{definition}[First-degree information discrimination]
	The seller sends different signals to different buyers.
\end{definition}
\begin{definition}[Second-degree information discrimination]
	The seller uses different experiments based on different buyer types and sends different signals to different buyers. 
\end{definition}

The optimal mechanism employs second-degree information discrimination, i.e., the experiments $\pi$ is a function of buyer types $t$, and different buyers receive different messages since each buyer receives an element of an $n$-dimensional vector in $\Sigma$. However, we can clearly send the same message to all buyers by sending the entire $n$-dimensional vector. This crucially depends on two important characteristics of our setting:
\begin{itemize}
    \item Each buyer's valuation does not depend on other buyers' types. This means the information about other buyers' types does not benefit any buyer.
    \item In the optimal mechanism, the seller asks at most one buyer. This is correlated with the elements in any outcome vector.
\end{itemize}
Hence, we have the following proposition.
\begin{proposition}
If second-degree discrimination is allowed but the first-degree discrimination is not allowed, the mechanism described in Theorem \ref{the:main} remains optimal.
\end{proposition}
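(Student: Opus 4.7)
The plan is to show that the mechanism of Theorem \ref{the:main} can be re-implemented so that every buyer receives one and the same message, with the experiment and the payments untouched. Once this is done, incentives, allocation, and revenue are all preserved; and since the Theorem \ref{the:main} revenue is already optimal over the \emph{unrestricted} mechanism space, it is \emph{a fortiori} optimal over the subspace that forbids first-degree discrimination.

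First I would construct the \emph{broadcast variant} of $(\pi^*,p^*)$. Instead of sending buyer $i$ only the $i$-th coordinate $x_i$ of the realized signal $\sigma\in\Sigma$, the seller publicly announces the full vector $\sigma$ to every buyer. The type-dependent experiment $\pi^*(t,q,\cdot)$ and the payment functions $p_i^*$ are left exactly as in Theorem \ref{the:main}, so the joint distribution of $(t,q,\sigma)$ is unchanged; only the \emph{dissemination} of $\sigma$ across buyers differs, and no buyer now sees anything that any other buyer does not. Hence this variant uses no first-degree discrimination, while still allowing $\pi^*$ to depend on $t$.

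Next I would verify that the broadcast variant is still incentive compatible. For obedience, the crucial observation is that the definition of $\Sigma$ (at most one coordinate equal to $1$) implies that the event $\{x_i=1\}$ already pins down the full vector $\sigma=s_i$. So the posterior that a winning buyer $i$ forms over $q$ from the broadcast coincides exactly with the posterior they form in the original mechanism from seeing $x_i=1$ alone, and the obedience constraint (which under our setting reduces to $U_i(t_i)\ge 0$ as noted around Equation \eqref{eq:IR}) is preserved. For truthfulness, the expressions \eqref{eq:utility} and \eqref{eq:IC} depend only on the marginal $\pi^*(\cdot,\cdot,s_i)$ and on $p_i^*$, both unchanged, so \eqref{eq:IC} continues to hold. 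A losing buyer may now hold a strictly more informative posterior about $q$, but since they take no action and obtain utility $0$ regardless of beliefs, no new profitable deviation is opened up. Finally, the revenue expression \eqref{eq:obj} depends only on $\pi^*$ and $p^*$, so revenue is identical to that of Theorem \ref{the:main}.

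The main obstacle I expect is the obedience step: one must argue carefully that the winner's Bayesian update is truly unaffected by broadcasting. This rests squarely on the at-most-one-indicator structure of $\Sigma$, itself a consequence of the item's indivisibility, and would fail in a setting with multiple units. A secondary but easier worry is confirming that letting losers learn who won does not create a profitable misreport; this is dispatched by the fact that losers realize zero payment and zero utility irrespective of their beliefs, so their truthfulness calculation reduces to comparing probabilities of being asked under truthful versus misreported types, which are unchanged by broadcasting.
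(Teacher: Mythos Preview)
Your proposal is correct and follows essentially the same approach as the paper: broadcast the full signal vector $\sigma$ to all buyers rather than only the $i$-th coordinate, and argue that the at-most-one-winner structure of $\Sigma$ together with the absence of valuation externalities ensures that incentives and revenue are unchanged. Your treatment is in fact more careful than the paper's, which gives only a one-paragraph sketch before the proposition and does not spell out the obedience/truthfulness verification you provide.
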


However, if the seller is also unable to use different experiments based on different buyer types, there is no need to gather the buyers' private information before deciding on the experiment. 
In this case, we obtain the following result.
\begin{proposition}
\label{prop:constant price}
If second-degree discrimination is not allowed, the optimal mechanism must charge a constant price.
\end{proposition}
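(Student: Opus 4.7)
The plan is to exploit the fact that, once the experiment is forced to be independent of reported types (which is the content of ruling out second-degree discrimination, per the discussion preceding the proposition), each buyer's problem collapses to a take-it-or-leave-it offer, and a two-sided IC comparison then forces the payment function to be flat.

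First, I would fix a buyer $i$ and formalize the restriction: the experiment becomes a map $\pi : Q \to \Delta(\Sigma)$ that does not depend on $t$. From $i$'s viewpoint both the probability of being asked, $P_i = \int_Q \pi(q, s_i)\, g(q)\,\dd q$, and the posterior density over qualities given the ``ask'' signal, $g(q\mid s_i) \propto \pi(q, s_i)\, g(q)$, are then fixed quantities that do not respond to $i$'s report. Letting $\bar{\alpha}_i = \int_Q \alpha(q)\, g(q\mid s_i)\,\dd q$, the expected valuation of a true-type-$t_i$ buyer conditional on being asked is $\bar{\alpha}_i t_i$, regardless of what was reported.

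Next, I would read off the IC condition. A buyer of true type $t_i$ who reports $t_i'$ and, when asked, optimally buys or walks away earns $U_i(t_i'; t_i) = P_i \cdot \max\{0,\, \bar{\alpha}_i t_i - p_i(t_i')\}$. Combining truthfulness with obedience and restricting to the buy region $B_i = \{t_i : \bar{\alpha}_i t_i \ge p_i(t_i)\}$ yields $p_i(t_i) \le p_i(t_i')$ whenever $\bar{\alpha}_i t_i \ge p_i(t_i')$. Applied in both directions to any two types in $B_i$ this forces $p_i$ to be constant on $B_i$; call this common value $p_i^\star$. For any $t_i \notin B_i$ the payment is never collected, so resetting $p_i \equiv p_i^\star$ on all of $T_i$ preserves revenue and keeps the buy region equal to $\{t_i : \bar{\alpha}_i t_i \ge p_i^\star\}$, so non-buying types continue to have no profitable deviation.

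The principal subtlety is interpretational rather than technical: one has to carefully separate the role of the report (which, under the restriction, matters only for pricing) from the role of the signal (which depends only on $q$), and in particular check that an envelope/IC argument is valid when the allocation probability is constant in the report. Once that separation is in place, the flat-price conclusion follows from the standard reasoning for posted-price mechanisms, and optimizing over $\pi$ and over the vector $(p_1^\star, \dots, p_n^\star)$ gives the optimal mechanism subject to the no-second-degree-discrimination constraint.
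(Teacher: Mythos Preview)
Your argument is correct and follows the same core idea as the paper: once the experiment is type-independent, a buyer's report affects only the price, so incentive compatibility forces the price to be flat (the paper's one-line version is simply that otherwise a buyer would deviate to $\argmin_{t_i} p_i(t_i)$). Your treatment is more careful in distinguishing the buy region $B_i$ and then normalizing off it, which patches a small sloppiness in the paper's terse proof, but the approach is essentially identical.
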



\subsection{Information Structure}
Throughout this paper, we don't assume the monotonicity of function $\frac{r(q) }{\alpha(q) }$. In this section, we will discuss how the monotonicity of the function $\frac{r(q) }{\alpha(q) }$ affects the information threshold. For brevity, let $\xi(q) = \frac{r(q)}{\alpha(q) }$.

In optimal mechanism, once buyer $i$ with type $t_i$ receives signal 1, he will have a belief that $\phi_i(t_i) \ge \xi(q) $, but what about the belief in $q$? Based on this intuition, we have the following observation.

\begin{observation}
Let $\xi^{-1}(q)$ be the inverse function of $\xi(q)$.
\begin{itemize}
    \item If $\xi(q)$ is monotone increasing in $q$, buyer $i$'s belief over $q$ is a ``lower'' interval $[\underline{q}, \xi^{-1}(\phi_i(t_i))]$.
    \item If $\xi(q)$ is monotone decreasing in $q$, buyer $i$'s belief over $q$ is a ``upper'' interval $ [\xi^{-1}( \phi_i(t_i) ),  \bar{q} ]$.
    \item Otherwise, buyer $i$'s belief over $q$ is cut into multiple segments.
\end{itemize}
\end{observation}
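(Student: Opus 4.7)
The plan is to derive the support of buyer $i$'s posterior belief over $q$ directly from the threshold allocation rule of Theorem~\ref{the:main}, and then read off the three cases from the monotonicity of $\xi$. Since $\pi^\ast(t, q, s_i)$ is deterministic $\{0,1\}$-valued under the optimal mechanism, the event ``buyer $i$ receives signal $s_i$'' is exactly the event $\{\phi_i(t_i) \geq \max_{j\neq i}\phi_j(t_j)\} \cap \{\phi_i(t_i) \geq \xi(q)\}$. Conditional on this event and on $t_i$ (which the buyer observes privately), Bayes' rule applied to the product prior on $(t_{-i}, q)$ gives the buyer's updated belief.

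The key step is to separate the two factors in the allocation condition. The inter-buyer comparison $\phi_i(t_i) \geq \max_{j\neq i}\phi_j(t_j)$ depends only on $t_{-i}$, while the threshold condition $\phi_i(t_i) \geq \xi(q)$ depends only on $q$ once $t_i$ is fixed. By the prior independence of $q$ and $t_{-i}$ assumed in the preliminaries, integrating $t_{-i}$ out of the joint posterior produces a constant multiplicative factor that is absorbed into normalization. What remains is a posterior on $q$ proportional to $g(q)\,\mathbf{1}\{\xi(q) \leq \phi_i(t_i)\}$, whose support is the sub-level set $S(t_i) = \{q \in [\underline{q},\bar{q}] : \xi(q) \leq \phi_i(t_i)\}$.

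The three cases then reduce to elementary reasoning about $S(t_i)$. If $\xi$ is monotone increasing, $S(t_i)$ is a lower interval of the form $[\underline{q}, \xi^{-1}(\phi_i(t_i))]$, truncated to $[\underline{q},\bar{q}]$ if $\phi_i(t_i)$ falls outside the range of $\xi$. If $\xi$ is monotone decreasing, it is the symmetric upper interval $[\xi^{-1}(\phi_i(t_i)), \bar{q}]$. If $\xi$ is non-monotone and continuous, its sub-level set is generically a finite union of maximal intervals, which is exactly the segmented belief described in the third bullet.

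The main obstacle, though a modest one, is cleanly justifying that the $t_{-i}$-dependent piece of the allocation rule contributes no restriction on the $q$-support. This relies essentially on the independence of $q$ and the buyer types assumed from the start; without it, the two indicators would not factor, the marginalization over $t_{-i}$ would leave a $q$-dependent weight, and the posterior support could in principle be shrunk below $S(t_i)$, breaking the clean interval characterization. Once that factorization is in place, the proof is little more than reading off preimages of a monotone or non-monotone function.
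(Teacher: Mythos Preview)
Your proposal is correct and follows exactly the reasoning the paper intends. The paper does not give a formal proof of this observation; it simply precedes the statement with the one-line remark that ``once buyer $i$ with type $t_i$ receives signal~1, he will have a belief that $\phi_i(t_i) \ge \xi(q)$,'' and then states the observation as self-evident. Your argument formalizes precisely this intuition: the threshold allocation of Theorem~\ref{the:main} is a product of an indicator in $t_{-i}$ and an indicator in $q$, independence of $q$ and $t_{-i}$ lets the former integrate out to a normalizing constant, and the posterior support in $q$ is the sub-level set $\{q:\xi(q)\le\phi_i(t_i)\}$, from which the three bullets follow immediately.
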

The information structure is related to which party values the quality of the item more. As the quality increases, if the buyer's valuation increases at a faster rate, the seller is inclined to offer the high-quality item to the buyer. In contrast, if the seller's reserve price increases faster, the seller may opt to sell lower-quality items to the buyer.

\subsection{Connection with Myerson Auction}
The crucial distinction between our setting and the standard auction setting lies in the fact that in our case, the seller owns private information regarding the item's quality $q$ and thus can jointly design information and payment functions. Moreover, if the seller has no information advantage over $q$, they cannot reveal information about $q$, and the allocation outcome will be independent of $q$. That is, function $\pi$ is independent of $q$, then $\pi^*$ becomes 
\begin{gather*}
    \pi^*(t, s_i) = 
    \begin{cases}
        1 & \text{if } \phi_i(t_i) \ge \max_{j \neq i}\{\phi_j(t_j), \bar{r} \}\\
        0 & \text{otherwise}
    \end{cases},
\end{gather*}
where $\bar{r}$ represents the seller's valuation of the item. It can be seen as a constant when $q$ is public information. 

We can observe that $\pi^*$ precisely aligns with Myerson's allocation function. Hence, the Myerson auction can be regarded as a special case of ours, where the quality $q$ remains constant.

\section{Conclusion}
In this paper, we studied the optimal mechanism design problem for a seller who has an information advantage over the item's quality. The optimal mechanism has a threshold structure: asks one buyer to buy or not sell the item to any buyer. The information structure involves partitioning the quality space and the number of partitions depends on the monotonicity of function $\frac{r(q)}{\alpha(q)}$. The Myerson auction can be seen as a special case of our mechanism: when the quality of item $q$ privately observed by the seller becomes public information, our optimal mechanism reduces to the Myerson auction.

\bibliographystyle{apalike}
\bibliography{bib}

\begin{thebibliography}{}

\bibitem[Akerlof, 1978]{akerlof1978market}
Akerlof, G.~A. (1978).
\newblock The market for “lemons”: Quality uncertainty and the market mechanism.
\newblock In {\em Uncertainty in economics}, pages 235--251. Elsevier.

\bibitem[Alonso and C{\^a}mara, 2016]{alonso2016persuading}
Alonso, R. and C{\^a}mara, O. (2016).
\newblock Persuading voters.
\newblock {\em American Economic Review}, 106(11):3590--3605.

\bibitem[Armstrong, 2000]{armstrong2000optimal}
Armstrong, M. (2000).
\newblock Optimal multi-object auctions.
\newblock {\em The Review of Economic Studies}, 67(3):455--481.

\bibitem[Babaioff et~al., 2012]{babaioff2012optimal}
Babaioff, M., Kleinberg, R., and Paes~Leme, R. (2012).
\newblock Optimal mechanisms for selling information.
\newblock In {\em Proceedings of the 13th ACM Conference on Electronic Commerce}, pages 92--109.

\bibitem[Bergemann et~al., 2018]{bergemann2018design}
Bergemann, D., Bonatti, A., and Smolin, A. (2018).
\newblock The design and price of information.
\newblock {\em American economic review}, 108(1):1--48.

\bibitem[Bergemann et~al., 2015]{bergemann2015limits}
Bergemann, D., Brooks, B., and Morris, S. (2015).
\newblock The limits of price discrimination.
\newblock {\em American Economic Review}, 105(3):921--957.

\bibitem[Bergemann et~al., 2022a]{bergemann2022selling}
Bergemann, D., Cai, Y., Velegkas, G., and Zhao, M. (2022a).
\newblock Is selling complete information (approximately) optimal?
\newblock In {\em Proceedings of the 23rd ACM Conference on Economics and Computation}, pages 608--663.

\bibitem[Bergemann et~al., 2022b]{bergemann2022optimal}
Bergemann, D., Heumann, T., Morris, S., Sorokin, C., and Winter, E. (2022b).
\newblock Optimal information disclosure in classic auctions.
\newblock {\em American Economic Review: Insights}, 4(3):371--388.

\bibitem[Bergemann and Morris, 2016]{bergemann2016information}
Bergemann, D. and Morris, S. (2016).
\newblock Information design, bayesian persuasion, and bayes correlated equilibrium.
\newblock {\em American Economic Review}, 106(5):586--591.

\bibitem[Castiglioni et~al., 2022]{castiglioni2022signaling}
Castiglioni, M., Romano, G., Marchesi, A., and Gatti, N. (2022).
\newblock Signaling in posted price auctions.
\newblock In {\em Proceedings of the AAAI Conference on Artificial Intelligence}, volume~36, pages 4941--4948.

\bibitem[Chen et~al., 2020]{chen2020selling}
Chen, Y., Xu, H., and Zheng, S. (2020).
\newblock Selling information through consulting.
\newblock In {\em Proceedings of the Fourteenth Annual ACM-SIAM Symposium on Discrete Algorithms}, pages 2412--2431. SIAM.

\bibitem[Deng et~al., 2014]{deng2014revenue}
Deng, X., Goldberg, P., Tang, B., and Zhang, J. (2014).
\newblock Revenue maximization in a bayesian double auction market.
\newblock {\em Theoretical Computer Science}, 539:1--12.

\bibitem[Es{\H{o}} and Szentes, 2007]{esHo2007optimal}
Es{\H{o}}, P. and Szentes, B. (2007).
\newblock Optimal information disclosure in auctions and the handicap auction.
\newblock {\em The Review of Economic Studies}, 74(3):705--731.

\bibitem[Fan and Shen, 2023]{FanS23}
Fan, Z. and Shen, W. (2023).
\newblock Revenue maximization mechanisms for an uninformed mediator with communication abilities.
\newblock In {\em Proceedings of the Thirty-Second International Joint Conference on Artificial Intelligence}, pages 2693--2700.

\bibitem[Gibbard, 1973]{gibbard1973manipulation}
Gibbard, A. (1973).
\newblock Manipulation of voting schemes: a general result.
\newblock {\em Econometrica: journal of the Econometric Society}, pages 587--601.

\bibitem[Guo and Shmaya, 2019]{guo2019interval}
Guo, Y. and Shmaya, E. (2019).
\newblock The interval structure of optimal disclosure.
\newblock {\em Econometrica}, 87(2):653--675.

\bibitem[H{\"o}rner and Skrzypacz, 2016]{horner2016selling}
H{\"o}rner, J. and Skrzypacz, A. (2016).
\newblock Selling information.
\newblock {\em Journal of Political Economy}, 124(6):1515--1562.

\bibitem[Kamenica and Gentzkow, 2011]{kamenica2011bayesian}
Kamenica, E. and Gentzkow, M. (2011).
\newblock Bayesian persuasion.
\newblock {\em American Economic Review}, 101(6):2590--2615.

\bibitem[Kolotilin et~al., 2017]{kolotilin2017persuasion}
Kolotilin, A., Mylovanov, T., Zapechelnyuk, A., and Li, M. (2017).
\newblock Persuasion of a privately informed receiver.
\newblock {\em Econometrica}, 85(6):1949--1964.

\bibitem[Liu et~al., 2021]{liu2021optimal}
Liu, S., Shen, W., and Xu, H. (2021).
\newblock Optimal pricing of information.
\newblock In {\em Proceedings of the 22nd ACM Conference on Economics and Computation}, pages 693--693.

\bibitem[Myerson, 1979]{myerson1979incentive}
Myerson, R.~B. (1979).
\newblock Incentive compatibility and the bargaining problem.
\newblock {\em Econometrica: journal of the Econometric Society}, pages 61--73.

\bibitem[Myerson, 1981]{myerson1981optimal}
Myerson, R.~B. (1981).
\newblock Optimal auction design.
\newblock {\em Mathematics of operations research}, 6(1):58--73.

\bibitem[Papadimitriou and Pierrakos, 2011]{papadimitriou2011optimal}
Papadimitriou, C.~H. and Pierrakos, G. (2011).
\newblock On optimal single-item auctions.
\newblock In {\em Proceedings of the forty-third annual ACM symposium on Theory of computing}, pages 119--128.

\bibitem[Rayo and Segal, 2010]{rayo2010optimal}
Rayo, L. and Segal, I. (2010).
\newblock Optimal information disclosure.
\newblock {\em Journal of political Economy}, 118(5):949--987.

\bibitem[Schottm{\"u}ller, 2023]{schottmuller2023optimal}
Schottm{\"u}ller, C. (2023).
\newblock Optimal information structures in bilateral trade.
\newblock {\em Theoretical Economics}, 18(1):421--461.

\bibitem[Shi, 2012]{shi2012optimal}
Shi, X. (2012).
\newblock Optimal auctions with information acquisition.
\newblock {\em Games and Economic Behavior}, 74(2):666--686.

\bibitem[Wei and Green, 2022]{wei2022price}
Wei, D. and Green, B. (2022).
\newblock Price discrimination with information design.
\newblock {\em Available at SSRN}.

\end{thebibliography}


\appendix

\section*{Appendix}
\section{Omitted Proofs in Section \ref{sec:space}}
\subsection{Proof of Theorem \ref{the:revelation}}
\begin{proof}
We construct a one-round mechanism and show that the constructed mechanism gives all buyers the same expected utilities and extracts the same expected revenue for the seller. 

Let $Z$ be the set of all leaf nodes. Each leaf node $z\in Z$ corresponds to an outcome, and these outcomes have varying utility for different buyers. From the perspective of buyer $i$, we represent $Z_i$ as all possible outcomes for buyer $i$. In a game tree, any node can be uniquely determined by the path from the root to the node. With a slight abuse of notation, we use $z\in Z_i$ to denote both a node and its path.

Then we define the utility of buyer $i$ when reaching leaf node $z \in Z_i$:
\begin{align*}
	u_i(z) = \begin{cases}
		v_i(t_i, q) - \tau(z)  &\text{if buyer $i$ gets the item}\\
		-\tau(z)  &\text{otherwise}
	\end{cases},
\end{align*}
where $\tau(z)$ is the total payment transfer from buyer $i$ to the seller along the path $z$. Any buyer gets the item if and only if the seller recommends him to buy and the buyer is willing to buy.

Note that we include the behavior of nature in path $z$, since we view nature as a player. Thus, $q$ can be inferred from $z$.

According to definition \ref{def:protocol}, any general interactive protocol prescribes the seller's behavior by associating the seller node $h$ with a distribution $\psi_h$ and the transfer node $h$ with transfer $t_h$. Let $\beta_i(t_i, h, c)$ be buyer $i$'s optimal strategy in the above mechanism, i.e., $\beta_i(t_i, h, c)$ is the probability of a buyer with type $t_i$ choosing child node $c$ at buyer node $h$. Let $\beta = (\beta_i)_{i\in N}$ be the strategy profile. Let $Z_i(t, q)$ be the set of all possible leaf nodes $z$ that buyer $i$ may reach and $\rho_i(z) $ be the probability of reaching node $z$ when we start from the root $(t, q)$ and use $\psi_h, t_h, \beta$ to move down the tree. According to the signal that buyer $i$ receives, the set $Z_i(t, q)$ can be divided into $\{Z_i(t, q, 1) \}$, $\{Z_i(t, q, 0) \}$ and $Z_i(t, q, \emptyset)$, where $Z_i(t, q, \emptyset)$ means that buyer $i$ exits the mechanism before the seller decision node.

We can denote by $Z_i^+(t, q, 1)$ be the set of leaf nodes $z$ where the seller asks buyer $i$ to buy and the outcome is buyer $i$ gets the item, and $Z_i^-(t, q, 1)$ be the set of lead nodes where buyer $i$ refuses to buy the item. Thus, $Z_i(t, q) = Z_i^+(t, q, 1) \cup Z_i^-(t, q, 1) \cup Z(t, q, 0) \cup Z(t, q, \emptyset) $ .

In one-round mechanisms, we can without loss of generality assume that any buyer $i$ will always be willing to buy the item if the seller sends signal $s_i$, since otherwise, the seller can simply re-label the signal $s_i$ as $s_0$, without affecting the outcome.

We construct the following one-round mechanism $\mathcal{M}$:
\begin{enumerate}
	\item Let $\Sigma = \{s_i \}_{i\in N \cup \{0\} } $ be the signal set, and set $\pi(t, q, s_i) = \sum_{z\in Z_i^+(t, q, 1)} \rho_i(z) $ to be the probability of sending signal $s_i$;
	\item When receiving signal $s_i$, buyer $i$ pays $p_i(t_i)$ to buy the item, where:
	\begin{align*}
		p_i(t_i) = \frac{ \mathbb{E}_{t_{-i}\sim F_{-i}, q\sim G } \left[\sum_{z \in Z_i(t, q) } \rho_i(z) \tau(z) \right]  }{ \mathbb{E}_{t_{-i} \sim F_{-i}, q\sim G } [\pi(t, q, s_i) ]}.
	\end{align*}
\end{enumerate}

Now we show that the constructed mechanism brings the same expected utility to all buyers as the original one. To illustrate, we will focus on discussing the buyer $i$, the rest of buyers are the same.

In the original mechanism, we have $\sum_{z\in Z_i(t, q)} \rho_i(z) = 1$. The expected utility of buyer $i$ is:
\begin{align*}
	&\mathbb{E}_{t_{-i}\sim F_{-i}, q\sim G} \left[ \sum_{z\in Z_i(t, q)} \rho_i(z) u_i(z) \right] \\
	=& \mathbb{E}_{t_{-i}\sim F_{-i}, q\sim G} \left[ \sum_{z\in Z_i^+(t, q, 1)} \rho_i(z) u_i(z) + \right.\\
	&\left. +  \sum_{z\in Z_i(t, q) - Z_i^+(t, q, 1)  } \rho_i(z) u_i(z)  \right] \\
	=& \mathbb{E}_{t_{-i}, q} \left[ \sum_{z\in Z_i^+(t, q, 1)} \rho_i(z) v_i(t_i, q) - \sum_{z\in Z_i(t, q)} \rho_i(z) \tau(z)    \right].
\end{align*}

In the constructed mechanism, the signal set $\Sigma =  \{s_i \}_{i\in N \cup \{0\} } $. When receiving signal 0, the buyer $i$ can do nothing and the expected utility is 0. Let $g(q|t_i, 1)$ be the posterior belief of buyer $i$ over quality $q$ upon receiving signal $1$. We have:
\begin{align*}
	g(q|t_i, 1) = \frac{\int_{T_{-i}} \pi(t, q, s_i) g(q) f_{-i}(t_{-i}) \,\dd q \dd t_{-i}  }{ \mathbb{E}_{t_{-i} \sim F_{-i}, q\sim G } [\pi(t, q, s_i) ] }.
\end{align*}
Therefore, buyer $i$'s expected utility is:
\begin{align*}
	&\mathbb{E}_{q\sim g(q|t_i, 1)} [v_i(t_i, q)] - p_i(t_i) \\
	=& \frac{\mathbb{E}_{t_{-i} \sim F_{-i}, q\sim G } [\pi(t, q, s_i) v_i(t_i, q) ] }{\mathbb{E}_{t_{-i} \sim F_{-i}, q\sim G } [\pi(t, q, s_i) ] }.
\end{align*}
So the expected utility of buyer $i$ with type $t_i$ in the constructed mechanism is:
\begin{align*}
	&\pi(s_i| t_i) \left[ \mathbb{E}_{q\sim g(q|t_i, s_i)} [v_i(t_i, q)] - p_i(t_i) \right] \\
	=& \mathbb{E}_{t_{-i} \sim F_{-i}, q\sim G } [\pi(t, q, s_i) v_i(t_i, q) ]  \\
	& \quad \quad - \mathbb{E}_{t_{-i}\sim F_{-i}, q\sim G } [\sum_{z \in Z_i(t, q)} \rho_i(z) \tau(z) ]  \\
	=& \mathbb{E}_{t_{-i}, q } \left[ \sum_{z\in Z_i^+(t, q, 1) } \rho_i(z)v_i(t_i, q) - \sum_{z\in Z_i(t, q)} \rho_i(z) \tau(z)   \right], 
\end{align*}
where $\pi(s_i|t_i)$ is the probability that buyer $i$ with type $t_i$ receives signal $s_i$, and the last equation is due to the construction of $\pi(t, q, s_i)$.

The above analysis shows that the constructed mechanism brings the same expected utility for buyer $i$ as the original mechanism. The same result applies to other buyers. Therefore, the constructed mechanism extracts the same expected revenue for the seller as the original mechanism.
\end{proof}

\section{Omitted Proofs in Section \ref{sec:optimal}}
\subsection{Proof of Lemma \ref{lem:property} }
\begin{proof} We first show that the conditions in Lemma \ref{lem:property} are necessary for any feasible mechanism. 

We can rewrite $U_i(t_i'; t_i)$ as follows:
\begin{align*}
	U_i(t_i'; t_i) =& \int_{T_{-i}}\int_{Q} [v_i(t_i', q) - p_i( t_i') + v_i(t_i, q) \\ & - v_i(t_i', q) ] \pi( (t_{-i}, t_i'), q, s_i) g(q) f_{-i}(t_{-i}) \,\dd q \dd t_{-i}\\
	=& \ U_i(t_i') + (t_i - t_i') R_i^{\pi}(t_i').
\end{align*}
Then the IC constraint \eqref{eq:IC} is equivalent to:
\begin{align}
\label{eq:left}
	U_i(t_i) \ge U_i(t_i') + (t_i - t_i') R_i^{\pi}(t_i').
\end{align}
The above equation holds for any type $t_i$ and $t_i'$. So switching $t_i$ and $t_i'$, we obtain:
\begin{align}
\label{eq:right}
	U_i(t_i') \ge U_i(t_i) + (t_i' - t_i) R_i^{\pi}(t_i).
\end{align}
Combining Equation $\eqref{eq:left}$ and $\eqref{eq:right}$ gives:
\begin{align}
	(t_i - t_i')R_i^{\pi}(t_i') \le U_i(t_i) - U_i(t_i') \le (t_i - t_i') R_i^{\pi}(t_i).
\end{align}
Notice that $(t_i - t_i') (R_i^{\pi}(t_i) - R_i^{\pi}(t_i')) \ge 0 $, which implies constraint \eqref{eq:monotone}.

When $t_i > t_i'$, we can divide the above inequality by $t_i - t_i'$:
\begin{align*}
	R_i^{\pi}(t_i') \le \frac{U_i(t_i) - U_i(t_i')}{t_i - t_i'} \le R_i^{\pi}(t_i).
\end{align*}
Letting $t_i \rightarrow t_i'$, we have:
\begin{align}
\label{eq:inte}
	\frac{\dd U_i(t_i)}{\dd t_i} = R_i^{\pi}(t_i).
\end{align}
The above equation still holds if $t_i < t_i'$. Therefore, Equation \eqref{eq:IC property} follows.

Next, we show that IR constraint \eqref{eq:IR} implies constraint \eqref{eq:IR property}. By definition, $R_i^{\pi}(t_i)$ is non-negative. Together with Equation \eqref{eq:inte} and \eqref{eq:monotone}, we know that $U_i(t_i)$ is convex and take its minimum value at $\underline{t_i}$. Therefore, to satisfy IR constraint \eqref{eq:IR}, we only need to ensure $U_i(\underline{t_i}) \ge 0$, that is Equation \eqref{eq:IR property}.

Now we need to show that the conditions in Lemma \ref{lem:property} is also sufficient. 

Constraint \eqref{eq:basic property} directly follows from \eqref{eq:basic}. The IC constraint \eqref{eq:IC} is equivalent to:
\begin{align*}
	U_i(t_i) \ge U_i(t_i') + (t_i - t_i') R_i^{\pi}(t_i'),
\end{align*}
which is implied by constraint \eqref{eq:IC property} since when $t_i > t_i'$, we have:
\begin{align*}
	U_i(t_i) - U_i(t_i') = \int_{t_i'}^{t_i} R_i^{\pi}(x) \dd x &\ge  \int_{t_i'}^{t_i} R_i^{\pi}(t_i') \dd x\\
	&= (t_i - t_i') R_i^{\pi}(t_i')
\end{align*}
Similarly, when $t_i' > t_i$, we also have $U_i(t_i) - U_i(t_i') \ge (t_i - t_i') R_i^{\pi}(t_i')$.

Recall that the IR constraint is $U_i(t_i) \ge 0$. By definition $R_i^{\pi}(t_i) \ge 0$, together with condition \eqref{eq:IC property}, we know that $U_i(t_i)$ is monotone increasing in $t_i$. Thus, $U_i(\underline{t_i}) \ge 0$ implies $U_i(t_i)\ge 0$ for all $t_i \in T_i$.
\end{proof}

\subsection{Proof of Lemma \ref{lem:feasible} }
\begin{proof}
Observe that if a problem is regular, the information structure $\pi^*(t, q, s_i)$ is monotone non-decreasing in $t_i$, for all $i \in N$. This means that $R_i^{\pi}(t_i)$ is non-decreasing in $t_i$, satisfying constraint \eqref{eq:monotone}.
	
Based on Equation \eqref{eq:utility} and payment function $p^*$, the utility of any buyer $i$ of type $t_i$ is equal to:
\begin{align*}
	U_i(t_i) =& \int_{T_{-i}} \int_{Q} \left[ v_i(t_i, q) - p_i^*(t_i) \right] \pi(t, q, s_i) g(q) f_{-i}(t_{-i}) \,\dd q \dd t_{-i} \\
	=& \int_{\underline{t_i}}^{t_i} R_i^{\pi^*}(x) \,\dd x.
\end{align*}
This means that $U_i(\underline{t_i}) = 0$ and that
\begin{align*}
	\frac{\dd U_i(t_i) }{\dd t_i} = R_i^{\pi^*}(x),
\end{align*}
satisfying constraints \eqref{eq:IC property} and \eqref{eq:IR property}.
\end{proof}

\subsection{Formal Definition of the Ironing}
\begin{definition}[Ironing]
Let $\psi_i(x)$ be any non-monotone function.

\begin{enumerate}
	\item Define a random variable $\omega = F_i(x)$ and let
	\begin{align*}
		h_i(\omega) = \psi_i(F^{-1}_{i}(\omega)  )
	\end{align*}
	where $F^{-1}_{i}(\omega)$ is the inverse function of $F_i(x)$. Note that $F_i(x)$ is continuous and strictly increasing since we assume that the density function $f_i(x)$ is strictly positive. Thus the inverse function $F^{-1}_i$ is also continuous and increasing.
	\item Let $H_i: [0, 1] \mapsto \mathbf{R} $ be the integral of $h_i(\omega)$:
	\begin{align*}
		H_i(\omega) = \int_0^w h_i(r) \,\dd r.
	\end{align*}
	\item Let $L_i: [0, 1] \mapsto \mathbf{R}$ be the convex hull of the function $H_i$:
	\begin{align*}
		L_i(\omega) = \min \{ \epsilon H_i(\omega_1) + (1- \epsilon) H_i(\omega_2) \},
	\end{align*}
	where $\epsilon, \omega_1, \omega_2 \in [0, 1]$ and $\epsilon \omega_1 + (1- \epsilon) \omega_2 = \omega$.
	\item Define $l_i: [0, 1] \mapsto \mathbf{R}$ such that:
	\begin{align*}
		l_i(\omega) = L_i'(\omega).
	\end{align*}
	\item Then we obtain the ironed function $\bar{ \psi_i} $:
	\begin{align*}
		\bar{\psi_i} (x) = l_i(\omega) = l_i(F_i(x)).
	\end{align*}
\end{enumerate}
\end{definition}

\section{Omitted Proofs in Section \ref{sec:dis}}
\subsection{Proof of Proposition \ref{prop:constant price} }
\begin{proof}
    If the seller uses the same experiment and sends the same signal to all buyers, then no matter what type the buyers report, they get the same amount of information. So to ensure IC, the seller needs to set a constant price, otherwise they will report a type $t_i' = \argmin_{t_i} p_i^*(t_i)$.
\end{proof}

\end{document}